\newcommand{\blue}[1]{\textcolor{black}{{#1}}}
\newcommand{\ignore}[1]{}
\newcommand{\cB}{{\cal B}}
\newcommand{\cF}{{\cal F}}
\renewcommand{\H}{{\cal H}}
\newcommand{\oftypenon}[2]{\set{\,{#1}\to{#2}\,}}
\newcommand{\oftype}[2]{\subseteq \oftypenon{#1}{#2}}
\newcommand{\B}{\{0,1\}}
\newcommand{\cff}{$(n,(r,s))$-CFF\xspace}
\newcommand{\N}{$N(n,(r,s))$\xspace}
\newcommand{\set}[1]{\{#1\}}
\newcommand{\rsimp}{\textsc{$r$-Simple $k$-Path}\xspace}
\def\Z{{\mathbb{Z}}}
\newcommand{\F}{{\cal F}}
\renewcommand{\P}{{\cal P}}
\newcommand{\rset}{$r$-\textrm{set}\xspace}
\newcommand{\rsets}{$r$-\textrm{set}s\xspace}
\newcommand{\rkset}{$(r,k)$-\textrm{set}\xspace}
\newcommand{\rksets}{$(r,k)$-\textrm{set}s\xspace}
\newcommand{\rksep}{$(r,k)$-separator\xspace}
\newcommand{\repcomputenon}{r^{8k/r + o(k/r)}\cdot 2^{O(k/r)}} 
\newcommand{\mondetectcomputenon}{r^{12k/r+o(k/r)}\cdot 2^{O(k/r)}} 
\newcommand{\sepcompute}{r^{4k/r+o(r)}\cdot 2^{O(k/r)}\cdot n\log n} 
\newcommand{\comp}[1]{\overline{#1}}
\newcommand{\rmondetect}{\textsc{$(r,k)$-Monomial Detection}\xspace}
\newcommand{\rr}{[r]_0}
\begin{document}
\title{Almost Optimal Cover-Free Families}
\author{Nader H. Bshouty\inst{}
\and Ariel Gabizon\inst{}}
\institute{Department of Computer Science\\ Technion, Haifa, 32000 }


%
\maketitle

\begin{abstract}
Roughly speaking, an \emph{$(n,(r,s))$-Cover Free Family} (CFF) is a small set of $n$-bit strings
such that: ``in any $d:=r+s$ indices we see all patterns of weight $r$''.
CFFs have been of interest for a long time both in discrete mathematics
as part of block design theory, and in theoretical computer science where
they have found a variety of applications, for example, in parametrized algorithms where they were introduced in the recent breakthrough work of Fomin, Lokshtanov and Saurabh
\cite{FLS14} under the name `lopsided universal sets'.

In this paper we give the first explicit construction of
cover-free families of optimal size up to lower order multiplicative terms,
\emph{for any $r$ and $s$}.
In fact, our construction time is almost linear in the size of the family.
Before our work, such a result existed only for {$r=d^{o(1)}$}.
and {$r= \omega(d/(\log\log d\log\log\log d))$}.
As a sample application, we improve the running times of parameterized algorithms from
the recent work of Gabizon, Lokshtanov and Pilipczuk \cite{GLP}.
\end{abstract}

\section{Introduction}
The purpose of this paper is to give an explicit almost optimal construction
of \emph{cover free families}~\cite{KS64}.
Before giving a formal definition, let us describe the special
case of \emph{group testing}.
The problem of  group testing was first presented during
World War II and described as follows~\cite{DH00,ND00}: Among $n$ soldiers, at most
$s$ carry a fatal virus. We would like to blood test the soldiers
to detect the infected ones. Testing each one separately will give
$n$ tests. To minimize the number of tests we can mix the blood of
several soldiers and test the mixture. If the test comes negative
then none of the tested soldiers are infected. If the test comes
out positive, we know that at least one of them is infected. The
problem is to come up with a small number of tests.

\blue{To obtain a non-adaptive algorithm for this problem},
a little thought shows that what is required is a set of tests such that
for any subset $T$ of $s$ soldiers, and any soldier $i\notin T$,
there is a test including soldier $i$, and precluding all soldiers in $T$.
Let $d= s+1$.
Viewing a test as a characteristic vector $a\in \B^n$ of the soldiers it includes, the desired property
is equivalent to the following. Find a small set
$\cF\subseteq\B^n$ such that for every $1\le i_1<i_2<\cdots<i_d\le
n$, and every $1\le j\le d$, there is $a\in \cF$ such that $a_{i_j}=1$
and $a_{i_k}=0$ for all $k\not=j$.

\subsection{Cover-Free Families}
We can view $\cF$ described above as a set of strings such
that ``in any $d$ indices we see all patterns of weight one''.
We can generalize this property by choosing an integer $1 \leq r <d$
and requesting to see ``in any $d$ indices all patterns of weight~$r$''.

\begin{definition}[Cover-Free Family]\label{dfn-cff}
Fix positive integers $r,s,n$ with $r,s<n$ and let $d:=r+s$.
An \emph{$(n,(r,s))$-Cover Free Family} (CFF) is a set $\cF\subseteq \B^n$ such that
 for every $1\le i_1<i_2<\cdots<i_{d}\le n$ and
every $J\subset [d]$ of size $|J|=r$ there is $a\in \cF$
 such that $a_{i_j}=1$ for $j\in  J$ and $a_{i_k}=0$ for $k\notin J$.
\end{definition}

 \emph{We will always assume $r\leq d/2$ \blue{(and therefore
 $r\le s$)}: If not, construct an $(n,(s,r))$-CFF and
 take the set of complement vectors.}

We note that the definition of CFFs usually given is a different equivalent one
which we now describe.
Given an \cff $\cF$, denote $N=|\cF|$ and construct the $N\times n$ boolean matrix $A$ whose rows
are the elements of $\cF$.
Now, let $X$ be a set of $N$ elements and think of the \emph{columns} of $A$
as characteristic vectors of subsets, which we will call \emph{blocks}, $B\subseteq X$.
That is, if we denote by $\cB=\set{B_1,\ldots,B_n}$ the set of blocks corresponding to these columns,
then $A$ is the \emph{incidence matrix} of $\cB$, i.e.
the $i$'th element of $X$ is in  $B_j$ if and only if $A_{i,j}=1$.

For this view, the CFF property of $\cF$ implies the following:
For any blocks $B_1,\ldots,B_r \in \cB$ and any other $s$ blocks
$A_1,\ldots, A_s \in \cB$ (distinct from the $B$'s), there is an element of $X$ contained in all the $B$'s but
not in any of the $A$'s, i.e.
$$\bigcap_{i=1}^r B_i\not\subseteq \bigcup_{j=1}^s A_j.$$

This property is the usual way to define CFFs ~\cite{KS64}.

\paragraph{Notation:}
Let us denote by \N the minimal integer $N$ such that there exists an \cff $\cF$ of size $|\cF| = N$.

\subsection{Previous Results}\label{OR}

It is known that, \cite{SWZ00},
\blue{
$$N(n,(r,s))\ge \Omega (N(r,s)\cdot \log n)$$
where
$$N(r,s):=\frac{d{d\choose r}}{\log{d\choose r}}.$$
}

Using the union bound it is easy to show
that for $d=r+s=o(n)$, $r\le s$, we have
$$N(n,(r,s))\le O\left(\sqrt{{r}}\log{d\choose r}\cdot N(r,s)\cdot\log n\right).$$
\blue{D'yachkov et. al.'s breakthrough result,~\cite{DVPS14}, implies that for $s,n\to\infty$
\begin{eqnarray}\label{fb2}
N(n,(r,s))= \Theta\left( N(r,s)\cdot \log n\right).
\end{eqnarray}
The two above bounds are non-constructive.}

It follows from \cite{SWZ}, that for an infinite sequence of integers $n$, an $(n,(r,s))$-CFF of size
$$M=O\left((rd)^{\log^* n}\log n\right)$$ can be constructed in polynomial time.

Before proceeding to describe previous results and ours, we introduce some convenient terminology:


We will think of the parameter $d=r+s$ as going to infinity and always use
the notation $o(1)$ for a term that is independent of $n$, and  goes to $0$ as $d\mapsto \infty$.

We say an $(n,(r,s))$-CFF $\cF$ is
{\it almost optimal},
if its size $N = |\cF| $ satisfies
\begin{eqnarray}\label{TB}
N&=& N(r,s)^{1+o(1)}\cdot \log n
= \begin{cases} d^{r+1+o(1)}\log n &\mbox{if } r=O(1) \\
\left(\frac{d}{r}\right)^{r+o(r)}\log n & \mbox{if } r=\omega(1), r=o(d)\\
2^{H_2(r/d)d+o(d)}\log n & \mbox{if } r=O(d) \end{cases} \nonumber.\end{eqnarray}
where $H_2(x)$ is the binary intopy function.

We say that such $\cF$ can be \emph{constructed in linear time} if
it can be constructed in time
$O(N(r,s)^{1+o(1)}\cdot \log n\cdot n)$.
In this terminology, our goal is to obtain almost optimal CFFs that
are constructible in linear time.

Let us first consider the case of constant $r$.
It is not hard to see that in this case an \cff $\cF$ of size
$d^{r+1}\log n$ is almost optimal by our definition \blue{(and in fact
exceeds the optimal size in (\ref{TB}) only by a multiplicative $\log d$ factor).}
Bshouty \cite{B14b} constructs $\cF$ of such size in linear time
and thus solves the case of constant $r$.
In fact, calculation shows that for any $r=d^{o(1)}$, $\cF$ of size
\[N=2^{O(r)}\cdot d^{r+1}\cdot \log n \]
is almost optimal.
Bshouty \cite{B14,B14b} constructs such $\cF$ in linear time for any $r=o(d)$.

\ignore{\begin{center}
\begin{tabular}{|c|c|c|c|c|}
\hline
 & Linear time.  & Upper & Lower \\

 $r$ & Size=$O(\ )$ & Bound & Bound \\
\hline\hline
 $O(1)$ & ${d^{r+1}}\log n$ & $\frac{d^{r+1}}{\log d}\log n$ & $\frac{d^{r+1}}{\log d}\log n$\\
\hline
 $o(d)$ & ${(ce)^rd^{r+1}}\log n$ & $\frac{d^{r+1}}{(r/e)^{r-1/2}\log d}\log n$ & $\frac{d^{r+1}}{(r/e)^{r+1}\log d}\log n$\\
\hline
\end{tabular}
\end{center}
In the table, $c>1$ is any constant and $e$ is the Euler's constant $2.718\cdots$.
}
\noindent
 We proceed to the case of larger $r$.
Fomin et. al. \cite{FLS14} construct an
\cff of size
\begin{eqnarray}\label{Fom}{d\choose r} 2^{O\left(\frac{d}{\log\log (d)}\right)}\log n\end{eqnarray}
in linear time.
This is almost optimal
when
$$r=\omega\left(\frac{d}{\log\log d\log\log\log d}\right).$$
To the best of our knowledge there is no explicit construction of almost optimal $(n,(r,s))$-CFFs
when $d^{o(1)}<r<\omega(d/(\log\log d\log\log\log d).$

Note that in this range
(and even for $r=\omega(1)$ and $r=o(d)$), $\cF$ is almost optimal if and only if it has size
\[N={d\choose r}^{1+o(1)} \log n = \left(\frac{d}{r}\right)^{r(1+o(1))}\cdot \log n.\]

Gabizon et. al \cite{GLP} made a significant step for general $r$ and  constructed an \cff
of size
\[O((d/r)^{2\cdot r}\cdot 2^{O(r)}\cdot \log n)\]
in linear time. This is quadratically larger than optimal.

\subsection{New Result}
As mentioned before, there is no explicit construction of almost optimal $(n,(r,s))$-CFFs
when $d^{o(1)}<r<\omega(d/(\log\log d\log\log\log d)$
and the result of \cite{GLP} is quadratically larger than optimal.
In this paper we close this quadratic gap and give an explicit construction of an almost optimal $(n,(r,s))$-CFF
for all $r$ and $s$. Our main result is
\begin{theorem}\label{thm:main}
Fix any integers $r<s<d$ with $d=r+s$. There is an almost
optimal $(n,(r,s))$-CFF, i.e., of size
$$N(r,s)^{1+o(1)}\cdot \log n,$$ that can be constructed in linear time.
That is, in time

$$O(N(r,s)^{1+o(1)}\cdot n\cdot \log n)$$
\end{theorem}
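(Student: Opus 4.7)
The plan is a concatenation: an outer code $C\colon [n]\to [q]^m$ provides ``coordinates'' that distinguish every $d$-subset, and an inner almost-optimal \cff over the small alphabet $[q]$ handles the pattern condition. Concretely, on the small universe $[q]$, with $q$ depending only on $d$ (and possibly mildly on $\log n$), I would construct a \cff $\cF_{\rm in}$ of size $N(r,s)^{1+o(1)}\log q$ by directly derandomizing the probabilistic existence proof behind~(\ref{fb2}), e.g.\ via the method of conditional expectations with pessimistic estimators counting uncovered $(d,r)$-patterns. Because $q$ is small and independent of $n$ up to a polylogarithmic factor, this step runs in time depending essentially only on $d$.

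For the outer code I would take a Reed--Solomon (or, when $q$ needs to be kept small, an algebraic-geometric) code $C\colon [n]\to [q]^m$ with the property that for every $d$-subset $S\subseteq [n]$ at least one coordinate $j\in [m]$ satisfies that the $d$ values $\set{C(x)_j : x\in S}$ are pairwise distinct. A standard collision-count argument shows that $q=\mathrm{poly}(d)\cdot\mathrm{polylog}(n)$ and $m=O(\log n/\log q)$ suffice. The final family is
\[
\cF := \set{ b\in \B^n : \exists\, j\in[m],\ a\in \cF_{\rm in},\ \forall x\in[n],\ b_x = a_{C(x)_j}}.
\]

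Correctness is immediate: given a $d$-subset $S\subseteq [n]$ and an $r$-subset $R\subseteq S$, pick a coordinate $j$ on which the $d$ images are pairwise distinct; the \cff property of $\cF_{\rm in}$ on the $d$-subset $\set{C(x)_j : x\in S}$ yields an $a\in \cF_{\rm in}$ realising the corresponding $r$-pattern, and the composed vector $b$ then realises $R$ on $S$. The total size is $m\cdot |\cF_{\rm in}| = N(r,s)^{1+o(1)}\cdot \log n$, and evaluating $C$ at all $x\in [n]$ plus writing out the $m\cdot |\cF_{\rm in}|$ composed vectors takes $O(N(r,s)^{1+o(1)}\cdot n\log n)$ time, matching the linear-time claim.

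The main obstacle will be matching the outer-code parameters uniformly across every regime of $r/d$. For very small $r$ the collision budget $\binom{d}{2}$ forces $q$ to be at least a small polynomial in $d$; for $r$ close to $d/2$ the $\log q$ overhead in $|\cF_{\rm in}|$ threatens to eat into the $o(1)$ in the exponent, so $q$ cannot be too large either. I expect that no single alphabet size will work uniformly and that one has to \emph{iterate} the concatenation in stages $[n]\to [q_1]\to [q_2]\to\cdots\to [q]$, with each stage contributing only a tiny multiplicative overhead. Verifying that this iteration converges to the overall $N(r,s)^{1+o(1)}\log n$ bound --- rather than only closing the gap in each regime separately --- is the delicate part, and is what seems to have prevented earlier work such as~\cite{FLS14,GLP} from matching the optimum across the full range of $r$.
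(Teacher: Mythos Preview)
Your outer-code reduction from $[n]$ to a universe of size $q=\mathrm{poly}(d)$ is correct and is exactly what the paper does too (Lemma~\ref{com} and Lemma~\ref{HF}). The substance of Theorem~\ref{thm:main}, however, lies entirely in the inner step, and there your proposal has a real gap. You plan to build the $(q,(r,s))$-CFF by derandomizing the probabilistic bound via conditional expectations, noting that since ``$q$ is small and independent of $n$'' this ``runs in time depending essentially only on $d$''. But ``depending only on $d$'' is far weaker than what the theorem demands: linear time means $N(r,s)^{1+o(1)}\cdot n\log n$, and since $n$ may be as small as $d+1$, the inner construction must itself run in time $N(r,s)^{1+o(1)}\cdot\mathrm{poly}(d)$. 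A pessimistic estimator that counts uncovered $(d,r)$-patterns must track $\binom{q}{d}\binom{d}{r}$ events, and any efficient perfect-hash/Reed--Solomon outer code forces $q\ge\Omega(d^2)$, so $\binom{q}{d}\ge d^{\Omega(d)}=N(r,s)^{\Omega(\log d)}$, since $N(r,s)\le 2^{O(d)}$. Your iterated concatenation in the last paragraph addresses size overheads across regimes of $r/d$, but does nothing for this running-time blowup: at the end of any chain $[n]\to[q_1]\to\cdots\to[q]$ you still face a $(q,(r,s))$-CFF on $q\ge\Omega(d^2)$.

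Circumventing this enumeration of $d$-subsets is precisely what the paper's constructions do. After the same reduction to $q=O(d^3)$, a $(q,r,k)$-splitter with $k=o(r)$ places exactly $r/k$ of the ``ones'' into each of $k$ buckets; one then \emph{guesses} the bucket loads $d_1+\cdots+d_k=d$ (only $\binom{d+k-1}{k-1}$ possibilities, not $\binom{q}{d}$) and handles bucket $i$ with a $(q,(r/k,d_i-r/k))$-CFF of known explicit construction. Construction~II sharpens this by replacing the product of per-bucket CFFs with a single ``multi-CFF'' built from dense separating hash families and the Linial--Luby--Saks--Zuckerman hitting set for combinatorial rectangles (Lemma~\ref{MCFF}). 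The delicate parameter tuning you anticipate does appear, but it is between the splitter cost $\sigma(r,k)$, the number of load vectors, and the per-bucket CFF sizes---not between stages of a code concatenation.
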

As we've seen in Section~\ref{OR}, the above
theorem is already proved for $r<d^{o(1)}$ and $r>\omega(d/(\log\log d\log\log\log d)).$
\section{Applications of result}

\ignore{CFFs are related quite directly to non-adaptive learning of monotone DNFs with restricted clause width: Fix a DNF $\phi$ with $(s+1)$ terms $\phi=C_1\vee\ldots \vee C_{s+1}$ with each $C_i$ containing at most $r$ variables. Intuitively, an assignment that is true on all variables of $C_i$, and false on at least one variable in any other clause is a `witness' for $C_i$ being in the DNF.
For this reason, there has been significant interest in explicit constructions of them (cf. intro of
\cite{ABM15}).
In fact, our result was used recently by Abasi et. al. \cite{ABM15}
to give for the first time such a determinstic learning algorithm that is polynomial time
and makes an almost optimal number of queries.}
\subsection{Application to learning hypergraphs}
Let ${\cal G}_{s,r}$ be a set of all labeled hypergraphs of
rank at most $r$ (the maximum size of an edge $e\subseteq V$
in the hypergraph) on the set of vertices $V=\{1,2,\ldots,n\}$
with at most $s$ edges.
Given a hidden Sperner hypergraph\footnote{The hypergraph
is Sperner hypergraph if no edge is a
subset of another. If it is not Sperner hypergraph
then learning is not possible.} $G\in {\cal G}_{s,r}$, we need to identify it by asking
{\it edge-detecting queries}. An edge-detecting query $Q_G(S)$, for $S\subseteq V$
is: Does $S$ contain at least one edge of $G$? Our objective is to {\it non-adaptively} learn
the hypergraph $G$ by asking as few queries as possible.

This problem has many applications in chemical reactions, molecular biology and genome sequencing,
where deterministic non-adaptive algorithms are most desirable.
In chemical reactions, we are given a set of chemicals,
some of which react and some which do not.
When multiple chemicals are combined in one
test tube, a reaction is detectable if and
only if at least one set of the chemicals in the tube reacts.
The goal is to identify which sets react
using as few experiments as possible. The time needed
to compute which experiments to do is a
secondary consideration, though it is polynomial for
the algorithms we present.
See \cite{ABM15} and references within
for more details and many other applications in molecular biology.

The above hypergraph ${\cal G}_{s,r}$ learning problem is
equivalent to the
problem of exact learning a monotone DNF with at most $s$ monomials (monotone terms),
where each monomial contains at most $r$ variables
($s$-term $r$-MDNF) from membership queries~\cite{A87,AC08}.
A membership query, for an assignment $a\in \{0,1\}^n$ returns $f(a)$
where $f$ is the hidden $s$-term $r$-MDNF.

The non-adaptive learnability of $s$-term $r$-MDNF
was studied in~\cite{T99,MP04,MRY04,GHTW06,DH06,CLY13}.
All the algorithms are either deterministic algorithms
that uses non-optimal constructions of $(n,(s,r))$-CFF or randomized
algorithms that uses randomized constructions of $(n,(s,r))$-CFF.
Our construction in this paper gives, for the deterministic algorithm,
a better query complexity and changes the randomized algorithm to deterministic.
Recently, our construction is used in \cite{ABM15} to give
a polynomial time almost optimal algorithm for learning ${\cal G}_{s,r}$.

\subsection{Application to $r$-Simple $k$-Path}
Gabizon et. al. \cite{GLP} recently constructed deterministic algorithms for
parametrized problems with `relaxed disjointness constraints'.
For example, rather than searching for a \emph{simple} path of length $k$
in a graph of $n$ vertices, we can search for a path of length $k$ where no vertex is visited
more than $r$ times, for some `relaxation parameter' $r$.
We call the problem of deciding whether such a path exists \emph{\rsimp}.
Abasi et. al \cite{ABGH14} were the first to study \rsimp
and presented a randomized algorithm running in time
$O^*(r^{2k/r})$.
What is perhaps surprising, is that the running time can \emph{significantly improve} as $r$ grows.
Derandoming the result of \cite{ABGH14}, \cite{GLP} obtained a deterministic algorithm for \rsimp with running time
$O^*(r^{12k/r}\cdot 2^{O(k/r)})$.
At the core of their derandomization is the notion of a `multiset separator' -
a small family of `witnesses' for the fact that two multisets do not `intersect too much'
on any particular element.
How small this family of witnesses can be in turn depends on how small
an $(n,(2k/r,k-2k/r))$-CFF one can construct (details on these connections are given in Appendix \ref{app:paramalg}).
Plugging in our new construction into the machinery of \cite{GLP}, we
get
\begin{theorem}\label{thm:rsimpalg}
\emph{\rsimp} can be solved in deterministic time
$O(\repcomputenon \cdot k^{O(1)}\cdot  n^3\cdot \log n)$.
\end{theorem}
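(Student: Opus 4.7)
The plan is to invoke the black-box reduction of Gabizon, Lokshtanov and Pilipczuk~\cite{GLP} (described in Appendix~\ref{app:paramalg}) from \rsimp to the construction of an $(n,(2k/r,\,k-2k/r))$-CFF, and substitute our almost optimal construction of Theorem~\ref{thm:main} for their suboptimal one. Correctness of the resulting algorithm is inherited verbatim from \cite{GLP}; only the running time needs reanalysis, so the proof is essentially an exercise in propagating the new CFF size through their pipeline.

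First, I would apply Theorem~\ref{thm:main} with the CFF parameters $r_{\mathrm{cff}}:=2k/r$ and $s_{\mathrm{cff}}:=k-2k/r$, so that $d_{\mathrm{cff}}=k$. Whenever $r=\omega(1)$ we have $r_{\mathrm{cff}}=o(d_{\mathrm{cff}})$, and Theorem~\ref{thm:main} produces a family $\cF$ of size
\[
N \;=\; \left(\frac{k}{2k/r}\right)^{(2k/r)(1+o(1))}\cdot \log n \;=\; r^{2k/r+o(k/r)}\cdot 2^{O(k/r)}\cdot \log n,
\]
constructible in time $O(N\cdot n\log n)$; the boundary case $r=O(1)$ is absorbed into the $2^{O(k/r)}$ slack using the known constant-$r$ CFF constructions of \cite{B14b}.

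Second, I would propagate $N$ through the two intermediate gadgets of \cite{GLP}: the multiset $(r,k)$-separator, whose size is cubic in $N$ and matches $\sepsize$; and the corresponding representative set, which is computed by an additional CFF-sized pass over the separator and hence contributes the $N^4 = r^{8k/r+o(k/r)}$ term $\repcompute$. Plugging the improved representative set into the dynamic program of \cite{GLP}, which incurs an additional $k^{O(1)}\cdot n^3$ overhead for iterating over endpoints and path lengths (and a $\log n$ factor from the CFF build), yields the claimed running time $O(\repcomputenon\cdot k^{O(1)}\cdot n^3\cdot\log n)$. The one real thing to verify is that each intermediate gadget in the \cite{GLP} pipeline scales as the expected polynomial in the CFF size, with no hidden dependence on the sub-optimality of the input CFF; this follows by direct inspection of their separator and representative-set constructions, both of which are polynomial-in-$N$ black boxes over the input family, so no additional ideas beyond Theorem~\ref{thm:main} are required.
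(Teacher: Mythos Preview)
Your high-level plan is exactly the paper's: plug the CFF of Theorem~\ref{thm:main} into the pipeline of \cite{GLP}. However, your accounting of how the bounds propagate through that pipeline is wrong in two places, and only accidentally lands on the right exponent.

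First, the multiset $(r,k)$-separator is \emph{not} cubic in the CFF size, and it does \emph{not} match $\sepsize$; that macro is the \emph{old} \cite{GLP} bound $r^{6k/r}$, which is precisely what this paper improves. In the actual pipeline one first turns the $(k^2,(t,k-t))$-CFF (with $t=\lfloor 2k/r\rfloor$) into a $(t,k)$-minimal separating family via Lemma~\ref{lem:cff->minimalsepfamily} and Corollary~\ref{cor:sepfamily}, at essentially the same size $r^{2k/r+o(k/r)}\cdot 2^{O(k/r)}\cdot\log n$; then Theorem~\ref{thm:multisetsepmain} multiplies by $(r+1)^t\le r^{2k/r+o(k/r)}$, yielding a separator of size $r^{4k/r+o(k/r)}\cdot 2^{O(k/r)}\cdot\log n$ (Corollary~\ref{cor:rksep}), not $r^{6k/r}$.

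Second, the final $r^{8k/r+o(k/r)}$ does not come from ``an additional CFF-sized pass'' over a size-$r^{6k/r}$ separator. It comes from the separator size entering \emph{quadratically} in the representative-set dynamic program of \cite{GLP} (their Corollary~3.8 and Theorem~5.6): in \cite{GLP}, separator $r^{6k/r}$ gave runtime $r^{12k/r}$; here, separator $r^{4k/r+o(k/r)}$ gives $r^{8k/r+o(k/r)}$. So while your conclusion is correct, the intermediate claims (``cubic in $N$'', ``matches $\sepsize$'', ``$N^4$ via one extra CFF pass'') are all off, and you also omit the minimal-separating-family step that the paper spells out.
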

For example, when both $k/r$ and $r$ tend to infinity,
we get running time $O^*(r^{8k/r + o(k/r)})$ and \cite{GLP} get $O^*(r^{12k/r + o(k/r)})$.

In a well-known work, Koutis \cite{K08} observed that practically all parametrized problems
can be viewed as special cases of `multilinear monomial detection'.
\cite{GLP} also studied the relaxed version of this more general problem:
Given an arithmetic circuit $C$ computing an $n$-variate polynomial $f\in \Z[X_1,\ldots,X_n]$, determine
whether $f$ contains a monomial of total degree $k$ and individual degree at most $r$.
We call this problem \rmondetect.
\cite{GLP} define such a circuit $C$ to be \emph{non-canceling} if it contains only variables at its leaves (i.e., no constants),
and only addition and multiplication gates (i.e., no substractions).
\cite{GLP} showed that for non-canceling $C$,  \rmondetect can be solved in
time $O^*(|C|\cdot r^{18k/r}\cdot 2^{O(k/r)})$.
We obtain
\begin{theorem}\label{thm:rmonalg}
Given a non-canceling arithmetic circuit $C$ computing $f\in \Z[X_1,\ldots,X_n]$, \rmondetect can be solved in deterministic time
$O(|C|\cdot \mondetectcomputenon \cdot k^{O(1)}\cdot n^3\cdot \log n)$.
\end{theorem}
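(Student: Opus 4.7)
The proof of Theorem \ref{thm:rmonalg} is a direct substitution: we feed our almost-optimal CFF (Theorem \ref{thm:main}) into the derandomization framework of \cite{GLP} for \rmondetect, exactly as we do for \rsimp to obtain Theorem \ref{thm:rsimpalg}. The plan is therefore to trace where CFFs appear in \cite{GLP}'s algorithm for monomial detection and rescale the exponents accordingly.

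First, I would recall the structure of \cite{GLP}'s deterministic algorithm: given a non-canceling circuit $C$, the algorithm reduces \rmondetect to repeatedly evaluating $C$ over restrictions induced by a family of multiset separators, which are themselves built from an $(n,(r',s'))$-CFF with $r' = 2k/r$ and $s' = k-2k/r$ (the same parameters as for \rsimp; see Appendix \ref{app:paramalg}). In \cite{GLP}, the CFF size enters the \rmondetect running time with an exponent $1.5$ times larger than for \rsimp, because monomial detection requires an additional inner layer of CFF-based color coding beyond what \rsimp needs; this explains the jump from $r^{12k/r}$ to $r^{18k/r}$ between the two \cite{GLP} bounds.

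Second, I would invoke Theorem \ref{thm:main} with the above parameters to obtain a CFF of almost optimal size
\[
\binom{k}{2k/r}^{1+o(1)}\cdot \log n \;=\; r^{2k/r + o(k/r)}\cdot 2^{O(k/r)}\cdot \log n,
\]
constructible in time linear in its size times $n\log n$. Since \cite{GLP} uses the CFF as a black box, every occurrence of a CFF-induced $r^{6k/r}$ factor in their bound gets replaced by $r^{4k/r + o(k/r)}$, precisely as in Theorem \ref{thm:rsimpalg}. The same $1.5$-fold scaling of exponents between \rsimp and \rmondetect now produces $12k/r + o(k/r)$ from our new $8k/r + o(k/r)$, giving the claimed \mondetectcomputenon factor in the running time. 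The remaining $k^{O(1)}\cdot n^3\cdot \log n$ polynomial overhead is inherited from \cite{GLP}'s circuit-evaluation machinery together with the $O(n\log n)$-per-element construction time of Theorem \ref{thm:main}.

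The main obstacle I anticipate is bookkeeping: I must verify that \cite{GLP}'s use of CFFs in the \rmondetect pipeline is genuinely black-box (only size and construction time enter the analysis), so that our new construction can be dropped in without disturbing the reduction, and I must carefully confirm that the improved CFF exponent $2k/r + o(k/r)$ propagates through each of the three nested layers of \cite{GLP}'s construction to yield exactly $12k/r + o(k/r)$ in the overall exponent, rather than being diluted by some CFF-independent step. Since the analogous verification for \rsimp is already implicit in Theorem \ref{thm:rsimpalg}, I expect this to be largely mechanical, with the main care needed in matching the $2^{O(k/r)}$ lower-order factors across the three layers.
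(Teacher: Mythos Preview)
Your proposal is correct and follows essentially the same plug-in strategy as the paper: feed the almost-optimal CFF of Theorem~\ref{thm:main} into the \cite{GLP} pipeline and recompute the exponents. The paper makes the intermediate steps slightly more explicit---it first converts the CFF into a $(t,k)$-minimal separating family (Lemma~\ref{lem:cff->minimalsepfamily}, Corollary~\ref{cor:sepfamily}), then into an \rksep of size $r^{4k/r+o(k/r)}\cdot 2^{O(k/r)}\cdot\log n$ (Corollary~\ref{cor:rksep}), and finally substitutes this \rksep for Corollary~3.4 of \cite{GLP} inside their Theorems~5.6 and~5.8---rather than reasoning via your ``$1.5$-fold scaling'' heuristic, but the arithmetic and the conclusion are the same.
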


\subsection*{Organization of paper}
In Section \ref{sec:overview} we give an informal description
of our CFF construction.
In Section~\ref{First} we give a simple construction
that proves Theorem \ref{thm:main} for any $\log^2 d\le r\le d/(\log\log d)^{\omega(1)}$.
In Section~\ref{Second}, we give the proof for $d/(\log d)^{\omega(1)} \le r\le d/{\omega(1)}$.
The proofs of Theorems \ref{thm:rsimpalg} and \ref{thm:rmonalg}
appear in Appendix \ref{app:paramalg}.

\section{Proof Overview}\label{sec:overview}

Our construction is essentially a generalization of \cite{GLP} allowing a more flexible choices of parameters.
For simplicity, we first describe the construction of \cite{GLP}
and then explain our improvements.
%

To illustrate the ideas in a simple way, the following `adaptive' viewpoint will be
convenient:
We are given two disjoint subsets $C,D\subseteq [n]$ of sizes $|C|=r$ and $|D|=s$.
We wish to divide $[n]$ into two separate buckets such that all elements
of $C$ fall into the first, and all elements of $D$ fall into the second.
Of course the point in CFFs is \emph{that we do not know $C$ and $D$ in advance}.
However, the number of different possibilites for the division that will come up in the process will be a bound on the size of an analogous \cff - which will contain a vector $a\in \B^n$ corresponding to each way of separating $[n]$ into two buckets that came up in
the adpative process.

As a first step we use a perfect hash function $h$ to divide $[n]$ into $r$ buckets
such that each bucket contains exactly one element of $C$.
Using a construction of  Naor et. al \cite{NSS95},
$h$ can be chosen from a family of size $2^{O(r)}\cdot \log n$.
Let us call these buckets $B_1,\ldots,B_r$.
Now, suppose that we knew, for each $i\in [r]$, the number of elements $s_i$ from $D$
that fell into bucket $B_i$.
In that case we could use an $(n,(1,s_i))$-CFF $\cF_i$ to separate the element
of $C$ in $B_i$ from the $s_i$ elements of $D$, and put each in the correct final bucket.

We have such $\cF_i$ of size $c\cdot s_i^2\cdot \log n$ for universal constant $c$.
Thus, the number of different choices in all buckets is
\[\prod_{i=1}^r c\cdot s_i^2\cdot \log n \leq c^r\cdot (s/r)^{2r}\cdot \log^r n,  \]
as the product of the $s_i$'s is maximized when $s_1=\ldots s_r = s/r$.
Furthermore, \cite{GLP} show this can be improved to roughly $ (s/r)^r\cdot \log n
\leq (d/r)^r \cdot \log n$ where $d=r+s$.
This is done using the hitting sets for combinatorial rectangles of Linial et. al \cite{LLSZ97}
(we do not go into details on this stage here).
Of course, we do not know the $s_i$'s.
However, it is not too costly to simply guess them!
Or rather, try all options:
The number of choices for non-negative integers $s_1,\ldots,s_r$ such
that $s_1+\ldots + s_r = s$ is at most
\[\binom{d-1}{r-1}\leq \binom{d}{r} \leq (ed/r)^r.\]

Combining all stages, this gives us an \cff of size roughly
$(d/r)^{2r+O(1)}\cdot \log n$.
To get an almost optimal construction, we need to get the 2 in the exponent down to a 1.
We achieve this by reducing the cost of the `guessing stage'.
Instead of $r$ buckets, we begin by dividing $[n]$ into $k$ buckets
for some  $k=o(r)$,
such that every bucket will contain $r/k$ elements of $C$.
This is done using \emph{splitters} \cite{NSS95}.
For concreteness, think of $k=r/\log\log d$. (In the final construction we
need to choose $k$ more delicately).
Now as we only have $k$ $s_i$'s, there will be  less
possibilites to go over such that
$s_1+\ldots+ s_k =s$ - specifically less than $(ed/k)^k$.
On the other hand, our task in each bucket is now more costly -
we need to separate $r/k$ elements of $C$ from $s_i$ elements of $D$,
rather than just \emph{one} element of $C$.
A careful choice of parameters show this process can be done
while going over at most $(d/r)^{1+o(1)}$ options for the partition into two buckets.

\noindent
There are now two main technical issues left to deal with.
\begin{itemize}

 \item The splitter construction of \cite{NSS95} was not analyzed as being almost-linear time,
 but rather, only polynomial time.
 We give a more careful analysis of it's runtime.

 \item We need to generalize a component from the construction of  \cite{GLP},
 into what we call ``multi-CFFs''.
 Roughly speaking, this is a small set of strings of length $n\cdot \ell$
 that are `simultaneously a CFF on each $n$-bit block'.
 That is, if we think of the string as divided into $\ell$ blocks of length $n$,
 and wish to see in each block a certain pattern of weight $r_i$ in some subset of $d_i$ indices
 of that block, there will be one string in the multi-CFF that simultaneously exhibits all patterns.
 We construct a small multi-CFF using a combination of ``dense separating hash functions''
 and the hitting sets for combinatorial rectangles of \cite{LLSZ97}.
 See Section \ref{sec:secondConst} for details.

 \end{itemize}

\section{The First Construction}\label{First}
In this section we give the first construction

\subsection{Preliminary Results for the First Construction}
We begin by giving some definitions and preliminary results that
we will need for our first construction. The results in this subsection are from
\cite{NSS95} and \cite{B14b}.

Let $n,q$ and $d$ be integers.
Let $\cF$ be a set of boolean functions $f:[q]^d\to \{0,1\}$.
Let $H$ be a family of functions $h:[n]\to [q]$. We say that
$H$ is an $(n,\cF)$-{\it restriction family} ($(n,\cF)$-RF) if for every $\{i_1,\ldots,i_d\}\subseteq [n]$,
$1\le i_1<i_2<\cdots<i_d\le n$ and every $f\in \cF$
there is a function $h\in H$ such that $f(h(i_1),\ldots,h(i_d))=1$.

We say that a construction of an $(n,\cF)$-restriction family $H$
is a {\it linear time construction}, if it
runs in time $\tilde O(|H|\cdot n)=|H|\cdot n\cdot poly(\log |H|,\log n)$.

Let $H$ be a family of functions $h:[n]\to [q]$.
For $d\le q$ we say that $H$ is an $(n,q,d)$-{\it perfect hash family} ($(n,q,d)$-PHF)
if for every
subset $S\subseteq [n]$ of size $|S|=d$ there is a {\it hash
function} $h\in H$ such that $h|_S$ is injective (one-to-one) on $S$, i.e.,
$|h(S)|=d$. Obviously, an $(n,q,d)$-PHF is an $(n,\cF)$-RF when  $\cF=\{f\}$, for some $f:[q]^d\to \{0,1\}$
satisfying $f(\sigma_1,\ldots,\sigma_d)=1$ iff $\sigma_1,\ldots,\sigma_d$ are distinct.

In \cite{B14b} Bshouty proved
\begin{lemma}\label{ThH1c}\label{phf}
Let $q$ be a power of prime. If $q>4(d(d-1)/2+1)$ then
there is a linear time construction of an $(n,q,d)$-PHF
of size
$$O\left(\frac{d^2\log n}{\log(q/d^2)}\right).$$
\end{lemma}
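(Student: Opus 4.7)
The plan is to use polynomial (Reed--Solomon) hashing combined with a derandomized union-bound argument. First I would set $\ell := \lceil \log_q n \rceil$ and identify each element $i \in [n]$ with a distinct polynomial $P_i \in \mathbb{F}_q[X]$ of degree less than $\ell$; this is possible since $q^\ell \geq n$. For each $\alpha \in \mathbb{F}_q$ this defines a candidate hash $h_\alpha : [n] \to [q]$ by $h_\alpha(i) := P_i(\alpha)$.

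Next, I would analyze when $h_\alpha$ is injective on a $d$-subset $S \subseteq [n]$. For each pair $i \neq j \in S$ the difference $P_i - P_j$ is a nonzero polynomial of degree less than $\ell$, so it has at most $\ell - 1$ roots; summing over pairs, at most $\binom{d}{2}(\ell-1)$ values of $\alpha$ are ``bad'' for $S$. Under the hypothesis $q > 4(\binom{d}{2} + 1)$, a uniformly random $\alpha \in \mathbb{F}_q$ is therefore good for $S$ with probability bounded away from $0$ in a way that scales as a positive power of $q/d^2$. A standard union bound over the $\binom{n}{d} \leq n^d$ possible $d$-subsets then shows that a random subfamily $A \subseteq \mathbb{F}_q$ of size $m = O(d^2 \log n / \log(q/d^2))$ works with positive probability, i.e., for every $d$-subset $S$ some $\alpha \in A$ is good for $S$. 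Setting $H := \set{h_\alpha : \alpha \in A}$ then yields the required $(n,q,d)$-PHF of the claimed size.

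To obtain a deterministic construction running in linear time, I would replace the probabilistic argument by a greedy / conditional-expectations derandomization that chooses $\alpha_1,\alpha_2,\ldots$ one at a time so that the chosen $\alpha_i$ covers at least an average fraction of the still-uncovered $d$-subsets; this is standard in spirit but nontrivial to make run in linear time, and here I would appeal to the efficient derandomization techniques of Naor--Schulman--Srinivasan~\cite{NSS95}. The per-$\alpha$ evaluation cost is $\tilde O(n)$ via Horner's rule applied in parallel to all $P_i$, so the total runtime comes out to $\tilde O(m\cdot n)$, which is linear in the family size up to polylogarithmic factors.

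\textbf{Main obstacle.} The delicate regime is when $n$ is so large relative to $q$ that $\ell = \log_q n$ approaches or exceeds $q/d^2$: the fraction of bad $\alpha$'s then approaches $1$ and the one-level construction collapses. My plan would be to handle this by first reducing to a larger intermediate alphabet $\mathbb{F}_{q^k}$ (via the same polynomial hashing, but with polynomials over $\mathbb{F}_{q^k}$), with $k$ chosen to balance the two layers, and then composing with the single-field variant above to finally land in $[q]$. Engineering this two-level composition so that it still runs in genuinely linear (rather than merely polynomial) time, while not inflating the exponent on $d$ in the resulting size bound, is the chief technical effort.
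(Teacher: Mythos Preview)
The paper does not prove this lemma; it is quoted from~\cite{B14b}, so there is no in-paper argument to compare against. I will comment on your plan on its own merits.

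Your starting point --- the Reed--Solomon evaluation maps $h_\alpha(i)=P_i(\alpha)$ --- is the right one and is indeed what underlies the construction in~\cite{B14b}. But the probabilistic-then-derandomize layer you place on top is both unnecessary where it would succeed and insufficient where it is needed. In the regime $\binom{d}{2}(\ell-1)<q$, no randomness is required: since every $d$-set has at most $\binom{d}{2}(\ell-1)$ bad evaluation points, \emph{any} fixed set of $\binom{d}{2}(\ell-1)+1$ points in $\mathbb{F}_q$ already forms a PHF, deterministically, of size $O(d^2\log n/\log q)\le O(d^2\log n/\log(q/d^2))$. In the opposite regime $\binom{d}{2}(\ell-1)\ge q$, there can be $d$-sets for which \emph{every} $\alpha\in\mathbb{F}_q$ is bad (take two polynomials differing by a multiple of $\prod_{\alpha\in\mathbb{F}_q}(X-\alpha)=X^q-X$), so no subset of $\mathbb{F}_q$, random or not, can work and your greedy search would stall. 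Relatedly, your sentence ``under the hypothesis $q>4(\binom{d}{2}+1)$, a uniformly random $\alpha$ is good with probability bounded away from~$0$'' is false as stated: that hypothesis says nothing about $\ell=\lceil\log_q n\rceil$, and the bad fraction $\binom{d}{2}(\ell-1)/q$ can be arbitrarily close to, or exceed,~$1$. Even in the favorable regime, the conditional-expectations scheme you sketch has to track $\binom{n}{d}$ subsets and is not linear time; the appeal to~\cite{NSS95} here is too vague to cover this.

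You correctly identify that large $\ell$ is where the content lies, and your instinct to pass through an intermediate extension field is sound. The actual construction in~\cite{B14b} proceeds by the purely deterministic pigeonhole argument above, applied at suitably chosen alphabet sizes and then composed (in the spirit of Lemma~\ref{com} of the present paper), with no probabilistic step at all; choosing the intermediate sizes so that the product of the stage sizes comes out to $O(d^2\log n/\log(q/d^2))$ while keeping each stage linear time is the real work, and your proposal leaves it open.
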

The following is a folklore result
\begin{lemma}\label{com} Let $\cF$ be a set of boolean functions $f:[q]^d\to \{0,1\}$.
If there is a linear time construction of an $(m,\cF)$-RF where $m>4(d(d-1)/2+1)$
of size $s$ then there is a linear time construction of an $(n,\cF)$-RF
of size $$O\left(\frac{sd^2\log n}{\log(m/d^2)}\right).$$
\end{lemma}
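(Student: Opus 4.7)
The plan is the standard two-stage composition: first hash $[n]$ down to $[m]$ while preserving distinctness on any $d$-subset, then apply the given $(m,\cF)$-RF. Concretely, I would take an $(n,m,d)$-perfect hash family $H$ provided by Lemma~\ref{phf} (with $q=m$, which is legitimate by the hypothesis $m>4(d(d-1)/2+1)$; if $m$ is not a prime power, replace it by the next one, losing only a constant factor), and let $G$ be the given $(m,\cF)$-RF of size $s$. The output family is
\[
H' := \{\, g\circ h \;:\; h\in H,\; g\in G\,\} \subseteq \{h'\colon [n]\to [q]\}.
\]
Its size is $|H|\cdot s = O\bigl(s\, d^2\log n/\log(m/d^2)\bigr)$, which is the claimed bound.

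For correctness, fix any $d$-subset $\{i_1,\ldots,i_d\}\subseteq [n]$ and any $f\in \cF$. By the PHF property there is $h\in H$ with $h(i_1),\ldots,h(i_d)$ pairwise distinct. Labelling these as $d$ elements of $[m]$ and using the RF property of $G$, there is $g\in G$ with $f(g(h(i_1)),\ldots,g(h(i_d)))=1$. So $g\circ h\in H'$ witnesses the RF condition.

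For the running time, the PHF $H$ is produced in time $\tilde O(|H|\cdot n)$ by Lemma~\ref{phf}, the RF $G$ is produced in time $\tilde O(s\cdot m)$ by hypothesis, and each composition $g\circ h$ is assembled in $O(n)$ work (compose pointwise on $[n]$). Since $m\le n$, the overall time is $\tilde O(|H|\cdot s\cdot n)$, which is linear in $|H'|\cdot n$ as required.

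I do not expect any real obstacle; the only mild point to be careful about is the prime-power requirement on $q=m$ in Lemma~\ref{phf}, which is handled by rounding $m$ up to the next prime power (Bertrand-type arguments suffice and cost only a constant factor in the size bound because $\log(m/d^2)$ changes only by an additive constant).
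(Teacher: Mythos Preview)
Your proof is correct and follows exactly the paper's approach: compose the given $(m,\cF)$-RF with an $(n,m,d)$-PHF from Lemma~\ref{phf}. The paper's proof is a two-line sketch (``it is easy to see''), whereas you spell out correctness, the size bound, the linear running time, and the prime-power technicality---all of which are reasonable elaborations of the same argument.
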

\begin{proof} Let $H_1$ be an $(m,\cF)$-RF and let $H_2$ be the $(n,m,d)$-PHF
constructed in Lemma~\ref{phf}. Then it is easy to see that $H_1(H_2):=\{h_1(h_2)
\ |\ h_2\in H_2, h_1\in H_1\}$ is an $(n,\cF)$-RF.\qed
\end{proof}

Another restriction family that will be used here is splitters \cite{NSS95}.
An {\it $(n,r,k)$-splitter} is a family of functions $H$ from $[n]$ to $[k]$
such that for all $S\subseteq [n]$ with $|S|=r$, there is $h\in H$ that splits
$S$ perfectly, i.e., for all $j\in [k]$,
$|h^{-1}(j)\cap S|\in \{\lfloor r/k\rfloor,\lceil r/k\rceil\}$.
Obviously, an $(n,q,d)$-PHF is an $(n,d,q)$-splitter.
Define
\begin{eqnarray}\label{srke}
\sigma(r,k):=\left(\frac{2\pi r}{k}\right)^{k/2}e^{k^2/(12r)}.
\end{eqnarray}
From the union bound it can be shown that there exists an $(n,r,k)$-splitter
of size $O(\sqrt{r}\sigma(r,k)\log n)$, \cite{NSS95}. Naor et. al, \cite{NSS95}, use
the $r$-wise independent probability space to construct an $(m,r,k)$-splitter.
They show
\begin{lemma}\label{UB} For $k\le r$, an $(m,r,k)$-splitter of size
$O(\sqrt{r}\sigma(r,k)\log m)$ can be constructed in time
$$O\left(\sqrt{r}\cdot\sigma(r,k)m^{2r}\log m\right).$$
\end{lemma}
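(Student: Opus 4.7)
The plan is to use a standard pessimistic-estimator style derandomization over an $r$-wise independent sample space. Let $\Omega$ be an $r$-wise independent distribution on functions $h:[m]\to[k]$, implemented by evaluating a random degree-$(r-1)$ polynomial over a field of size $\Theta(m)$ and reducing mod $k$. The support of $\Omega$ has size $O(m^r)$, each element is describable by $r$ field elements, and $h(i)$ can be evaluated in $\mathrm{poly}(r,\log m)$ time.

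For any fixed $S\subseteq[m]$ with $|S|=r$, the probability (under $\Omega$, and equivalently under the truly uniform distribution since the event depends on only $r$ coordinates) that $h$ splits $S$ perfectly is exactly the multinomial probability that $r$ uniform elements of $[k]$ land with counts $\lfloor r/k\rfloor$ or $\lceil r/k\rceil$ per bin. A direct Stirling estimate shows this probability is at least $c\sqrt{r}/\sigma(r,k)$ for an absolute constant $c>0$, where $\sigma(r,k)$ is as defined in~\eqref{srke}. This Stirling bookkeeping, ensuring the exponent $k/2$ and the $\sqrt{r}$ prefactor come out correctly whether or not $k$ divides $r$, is the step that needs the most care.

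Now run the greedy covering: maintain the set $T$ of as-yet unsplit $r$-subsets of $[m]$ (initially $T=\binom{[m]}{r}$), and at each iteration scan all $|\Omega|=O(m^r)$ functions, choosing one $h$ maximizing the number of $S\in T$ that $h$ splits perfectly. By the probability bound, a random $h\in\Omega$ splits an expected $\ge c\sqrt{r}|T|/\sigma(r,k)$ of them, so the chosen $h$ does at least as well. Hence after each step $|T|$ shrinks by a factor of $(1-c\sqrt{r}/\sigma(r,k))$, and $T=\emptyset$ after $O((\sigma(r,k)/\sqrt{r})\cdot\log|\binom{[m]}{r}|)=O(\sqrt{r}\sigma(r,k)\log m)$ iterations. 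The resulting family $H$ is the desired $(m,r,k)$-splitter of the claimed size.

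For the running time, each iteration enumerates the $O(m^r)$ functions in $\Omega$ and, for each, evaluates it on all $\binom{m}{r}=O(m^r/r!)$ currently uncovered $r$-subsets to tally how many it splits. Using a single scan over $h$ of length $m$ one can compute the bin profile in $O(m)$ time and then increment counters on the relevant $r$-subsets, but the crude bound of $O(m^r)$ work per function already suffices. Multiplying the $O(m^{2r})$ work per iteration by the $O(\sqrt{r}\sigma(r,k)\log m)$ iterations gives the claimed $O(\sqrt{r}\sigma(r,k)m^{2r}\log m)$ total time. The main obstacle, as noted, is the careful multinomial/Stirling computation yielding the $\sqrt{r}/\sigma(r,k)$ lower bound on the perfect-split probability; everything else is a routine derandomization.
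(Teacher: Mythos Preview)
Your proposal is correct and matches the approach the paper invokes: the paper does not prove Lemma~\ref{UB} itself but attributes it to Naor, Schulman, and Srinivasan~\cite{NSS95}, whose construction is precisely the $r$-wise independent sample space combined with the greedy covering argument you sketch, with the Stirling computation giving the $\Omega(\sqrt{r}/\sigma(r,k))$ perfect-split probability. The only loose end is that reducing a polynomial over a field of size $\Theta(m)$ modulo $k$ yields only approximately uniform marginals on $[k]$ when $k\nmid q$; this is a routine fix (take $q$ a multiple of $k$, or large enough that the $O(k/q)$ bias is absorbed) and does not affect the stated bounds.
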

When $k=\omega(\sqrt{r})$, Naor et. al. in \cite{NSS95}, constructed an $(n,r,k)$-splitter
of size $O(\sigma(r,k)^{1+o(1)}\log n)$ in polynomial time. \blue{We here
show that the same construction can be done in \emph{linear time}}.
They first construct an $((r/z)^2,r/z,$ $k/z)$-splitter using Lemma~\ref{UB}
where $z=\Theta(r\log k/(k\log(2r/k)))$. They then use Lemma~\ref{com} to construct
an $(r^2,r/z,k/z)$-splitter. Then compose $z$ pieces of the latter to
construct an $(r^2,r,k)$-splitter and then again use Lemma~\ref{com}
to construct the final $(n,r,k)$-splitter.

Note here that we assume that $z|k|r$. The result can be extended to
any $z,k$ and $r$.

We now prove
\begin{lemma}\label{splitter} \label{Fc} For $k= \omega(\sqrt{r})$ and $z=16r\log k/(k\log(4r/k))$.
An $(n,r,k)$-splitter of size
$$r^{O(z)}\sigma(r,k)\log n=\sigma(r,k)^{1+o(1)}\log n$$ can be constructed in time $O(\sigma(r,k)^{1+o(1)}\log n)$.
\end{lemma}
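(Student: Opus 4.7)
The plan is to follow the four-stage construction of Naor, Schulman and Srinivasan~\cite{NSS95} sketched in the paragraph preceding the lemma, but to upgrade its original polynomial-time guarantee to linear time by a careful accounting of the cost of each stage.

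Concretely, I will proceed as follows. \textbf{Stage 1.} Invoke Lemma~\ref{UB} with domain size $m_0 = \Theta((r/z)^2)$ and parameters $r' = r/z$, $k' = k/z$ to produce a base $(m_0, r/z, k/z)$-splitter of size $O(\sqrt{r/z}\,\sigma(r/z,k/z)\log m_0)$; the construction time is dominated by the brute derandomization factor $m_0^{2r/z} = (r/z)^{O(r/z)}$. \textbf{Stage 2.} Feed this family into Lemma~\ref{com} with $d=r/z$ to amplify the domain to $m_1 = r^2$, yielding an $(r^2, r/z, k/z)$-splitter at a multiplicative cost of $O((r/z)^2\log r)$ in both size and time (the constant slack in $m_0$ ensures $\log(m_0/d^2)=\Theta(1)$). \textbf{Stage 3.} Compose $z$ independent copies of the Stage-2 family to obtain an $(r^2, r, k)$-splitter: on the target $r$-set, split into $z$ blocks of $r/z$ elements and apply one Stage-2 splitter per block, routing its $k/z$ buckets to a disjoint block of $k/z$ of the $k$ final buckets; the resulting family has size at most the $z$-th power of the Stage-2 size, and it is constructed in time proportional to that size. \textbf{Stage 4.} Apply Lemma~\ref{com} once more with $d=r$ and $m = \Theta(r^2)$ to lift the domain from $r^2$ to $n$, paying an $O(r^2\log n)$ multiplicative factor.

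The main obstacle is verifying that, with the specific choice $z = 16r\log k/(k\log(4r/k))$, all four stages together contribute only $\sigma(r,k)^{1+o(1)}\log n$ in both size and time. This requires three pieces. (i) Bounding the $z$-fold product $\sigma(r/z,k/z)^z$ by $\sigma(r,k)^{1+o(1)}$, using the definition $\sigma(r,k)=(2\pi r/k)^{k/2}e^{k^2/(12r)}$ and the hypothesis $k=\omega(\sqrt r)$ to control the exponential part; (ii) showing that the brute-force Stage-1 time $(r/z)^{O(r/z)}$ collapses to $r^{O(z)}\subseteq \sigma(r,k)^{o(1)}$, which is precisely why $z$ is defined as above (one can check that the choice equates the logarithms of these two competing terms up to constants); and (iii) checking that the polynomial-in-$r$ overheads from Stages 2 and 4, together with the single $\log n$ factor, sit inside the same $o(1)$ slack. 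The hypothesis $k=\omega(\sqrt r)$ also guarantees $z=o(k)$, so that $k/z\geq 1$ and the Stage-3 composition is well-defined; the minor divisibility issues in $z|k|r$ are handled by rounding, exactly as remarked by the authors before the statement.
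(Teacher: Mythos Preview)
Your four-stage outline is exactly the paper's route, and points (i) and (iii) are fine (in fact $\sigma(r/z,k/z)^z=\sigma(r,k)$ exactly). But there are two genuine gaps.

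\textbf{Stage 3 is not a construction as you describe it.} Saying ``on the target $r$-set, split into $z$ blocks of $r/z$ elements'' is adaptive: the splitter family must be fixed before the $r$-set $S\subseteq[r^2]$ is revealed. With a fixed partition of $[r^2]$ into $z$ intervals there is no reason for $S$ to hit each interval in exactly $r/z$ points, so the Stage-2 $(r^2,r/z,k/z)$-splitters have nothing to act on. The paper handles this by additionally enumerating all $\binom{r^2}{z-1}$ choices of breakpoints $0=i_0<i_1<\cdots<i_z=r^2$ and, for each choice, composing $z$ Stage-2 splitters restricted to the resulting intervals. The size is therefore $\binom{r^2}{z-1}\cdot|H|^z$, not $|H|^z$; the extra factor is $\le r^{2z}$ and is absorbed into $r^{O(z)}$, so the final bound survives, but your size claim as stated is off and the construction is incomplete.

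\textbf{Point (ii) is false as written.} The Stage-1 cost is governed by $(r/z)^{O(r/z)}$, and this is \emph{not} $r^{O(z)}$ in general: for $k=r$ one has $z=8\log r$, so $r^{O(z)}=2^{O(\log^2 r)}$ while $(r/z)^{r/z}=2^{\Theta(r)}$. The parenthetical ``the choice equates the logarithms of these two competing terms up to constants'' is simply incorrect. The paper does not route through $r^{O(z)}$ here; instead it bounds the Stage-1 cost directly against $\sigma(r,k)$ (its Lemma~\ref{LL2}) via the identity $r/z=\Theta\bigl(k\log(4r/k)/\log k\bigr)$, which gives $(r/z)^{O(r/z)}\le k^{O(r/z)}=(4r/k)^{O(k)}\le\sigma(r,k)^{c}$ for a constant $c<1$. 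You need that direct comparison (or something equivalent); your stated chain does not go through.
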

\begin{proof} \blue{By Lemma~\ref{LL1} in Appendix \ref{app:splitters}, $z$ is a monotonic decreasing function
in $k$ and $16\sqrt{r}\ge z\ge 8\log r$ for $\sqrt{r}\le k\le r$.
First we construct an $((r/z)^2,r/z,$ $k/z)$-splitter using Lemma~\ref{UB}.
By Lemma~\ref{UB} and Lemma~\ref{LL2} in Appendix \ref{app:splitters}, this takes time
$$O(\sqrt{r/z}\cdot \sigma(r/z,k/z) ((r/z)^2)^{2r/z}\log (r/z))=o(\sigma(r,k)).$$
By Lemma~\ref{UB}, the size of this splitter is $O(\sqrt{r/z}\cdot \sigma(r/z,k/z)\log (r/z))$.
By Lemma~\ref{com}, using the above splitter,
an $(r^2,r/z,k/z)$-splitter $H$ of size
$$O((r/z)^{2.5} \sigma(r/z,k/z) \log (r/z) \log r)$$
can be constructed in linear time. Now, for every choice
of $0=i_0< i_1<i_2<\cdots<i_{z-1}<i_z={r^2}$ and $h_0,h_1,\ldots,h_{z-1}\in H$
define the function $h(j)=h_{t}(j)+(k/z)t$ if $i_t< j \le  i_{t+1}$.
It is easy to see that this gives an $(r^2,r,k)$-splitter. The splitter
can be constructed in linear time and by Lemma~\ref{LL3} in Appendix \ref{app:splitters}, its size
is
$${r^2\choose z-1}\left(c_1(r/z)^{2.5}\sigma(r/z,k/z)\log (r/z)\log r\right)^z=r^{c_2z}\sigma(r,k)$$ for some constants $c_1$ and $c_2$.
Now by Lemma~\ref{com} and Lemma~\ref{LL4} in Appendix \ref{app:splitters}, an $(n,r,k)$-splitter can be constructed in time
$$O(r^2(r^{c_2z}\sigma(r,k))\log n) =r^{O(z)}\sigma(r,k)\log n=\sigma(r,k)^{1+o(1)}\log n.$$}\qed
\end{proof}
The following is from \cite{B14b}
\begin{lemma}\label{CFFr}\label{cff}
There is an
$(n,(r,s))$-CFF of size
$$O\left(rs{2rs\choose r} \log n\right)$$ that can be constructed in linear time.
\end{lemma}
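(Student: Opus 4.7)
My plan is to prove the lemma by a two-stage composition: a trivial base CFF on a tiny universe, lifted to $[n]$ via a separating hash family.

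For the base, set $m := 2rs$ and let $\mathcal{G} \subseteq \B^m$ be the set of all Hamming-weight-$r$ strings. Then $\mathcal{G}$ is an $(m,(r,s))$-CFF of size $\binom{2rs}{r}$, constructible in time $O(rs \cdot \binom{2rs}{r})$: for any $d$-subset $T \subseteq [m]$ and any $r$-subset $J \subseteq T$, the indicator vector of $J$ lies in $\mathcal{G}$ and realises the required pattern ($1$ on $J$, $0$ on $T \setminus J$).

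For the lift, I would construct a family $H$ of functions $h:[n]\to[2rs]$ with the \emph{separating} property: for every disjoint $R,S\subseteq[n]$ with $|R|=r$ and $|S|=s$, some $h\in H$ has $h(R)\cap h(S)=\emptyset$. Given such an $H$ of size $O(rs\log n)$, the composed family
\[
\cF \ :=\ \{\, i\mapsto w_{h(i)} \ :\ h\in H,\ w\in\mathcal{G}\,\}\ \subseteq\ \B^n
\]
has size $|H|\cdot\binom{2rs}{r}=O(rs\binom{2rs}{r}\log n)$. To verify it is a CFF, take any $d$-subset $T=J\sqcup K$ of $[n]$ with $|J|=r$ and pick $h\in H$ separating $J$ from $K$; then $h(J),h(K)\subseteq[2rs]$ are disjoint with sizes $\le r$ and $\le s$. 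Since $r+s \le 2rs$, I can pad $h(J)$ inside $[2rs]\setminus h(K)$ up to a set $J'$ of size exactly $r$; the indicator $\mathbf{1}_{J'}\in\mathcal{G}$ is then $1$ on $h(J)$ and $0$ on $h(K)$, so its composition with $h$ realises the desired pattern on $T$.

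The technical heart of the argument, and the main difficulty I expect, is constructing $H$ in linear time with $|H|=O(rs\log n)$. A uniform random $h:[n]\to[2rs]$ separates a fixed $(R,S)$ with probability at least $(1-s/(2rs))^{r}\ge 1/2$, so existence of $H$ of size $O(d\log n)=O(rs\log n)$ is immediate from a union bound over the $\binom{n}{r+s}$ bad pairs. The hard point is derandomising this at the small codomain $m=2rs < d^2$, since Lemma~\ref{phf} requires $q>2d^2$ and hence does not supply a PHF onto $[2rs]$ directly; one must exploit that \emph{separation} is strictly weaker than injectivity on the full $d$-set. I would mirror the nesting used in the proof of Lemma~\ref{splitter}: first build an explicit SHF on a universe of size $\mathrm{poly}(r,s)$ via a $d$-wise independent sample space (in the spirit of Lemma~\ref{UB}), then invoke Lemma~\ref{com} to propagate it up to $[n]$.
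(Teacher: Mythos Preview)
The paper does not prove this lemma; it is quoted verbatim from \cite{B14b}. Your two-stage plan --- the trivial base CFF of all weight-$r$ strings on $[2rs]$, lifted via an $(n,2rs,(r,s))$-separating hash family of size $O(rs\log n)$ --- is the correct shape, and your verification that the composition is an $(n,(r,s))$-CFF is sound.

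The genuine gap is in the step you yourself flag as the ``technical heart'': constructing the SHF with codomain $[2rs]$ and size $O(rs\log n)$. Your proposed route through Lemma~\ref{com} does not deliver this in general. That lemma lifts an $(m,\cF)$-restriction family to $[n]$ by composing with an $(n,m,d)$-PHF, which must be injective on the \emph{entire} $d$-set $R\cup S$ and therefore contributes a multiplicative factor $\Theta(d^2\log n/\log(m/d^2))$. Even if your intermediate SHF on $[m]$ (with $m=\mathrm{poly}(d)$) had size $1$, the lifted family would already have size $\Omega((d^2/\log d)\log n)$. When $r$ is small --- e.g.\ $r=O(1)$, so that $rs=\Theta(d)$ --- this is $\Omega((d^2/\log d)\log n)\gg rs\log n$. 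The PHF is simply too strong a hammer: it enforces injectivity on all $d$ points, whereas you only need to keep the $r$ points apart from the $s$ points. To match the bound of Lemma~\ref{cff} across the full range of $r$ one needs a \emph{direct} linear-time SHF construction into a codomain of size $\Theta(rs)$ (this is what \cite{B14b} supplies), not a reduction through a $d$-perfect hash family; the ``Lemma~\ref{UB}-style search plus Lemma~\ref{com}'' nesting you propose cannot recover the stated size for small $r$.
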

\subsection{Construction I}
Let $r\le s$ be integers and $d=r+s$. Obviously, $1\le r\le d/2$ and $d/2\le s\le d$.
We may also assume that
\begin{eqnarray}\label{assume}
r>poly(\log d)=d^{o(1)}.
\end{eqnarray} See the table in Section~\ref{OR}
and the discussion following it.

We first use Lemma~\ref{com} to reduce the problem to constructing
a $(q,(r,s))$-CFF for $q=O(d^3)$. We then do the following.
Suppose $1\le i_1<i_2<\cdots<i_d\le q$
and let $(\xi_1,\ldots,\xi_d)\in \{0,1\}^d$ with $r$ ones (and $s$ zeros)
that is supposed to be assigned to $(i_1,i_2,\cdots,i_d)$.
Let $i_{j_1},\ldots,i_{j_r}$ be the entries for which $\xi_{{j_1}},\ldots,\xi_{{j_r}}$
are equal to $1$. The main idea of the construction is to first deal with entries
$i_{j_1},\ldots,i_{j_r}$ that
are assigned to one and distribute them equally into $k$ buckets, where $k$ will
be determined later.
This can be done using a $(q,r,k)$-splitter.
Each bucket will contains $r/k$ ones and an unknown number of zeros.
We do not know how many zeros, say $d_i-(r/k)$, fall in bucket $i$  but we know
that $d_1+\cdots+d_k=d$. That is, bucket $i$ contains $d_i$ indices of $i_1,i_2,\cdots,i_d$
for which $r/k$ of them are ones. We take all possible $d_1+\cdots+d_k=d$
and for each bucket $i$ construct $(q,d_i-(r/k),r/k)$-CFF. Taking all possible
functions in each bucket for each possible $d_1+\cdots+d_k=d$ solves the problem.

Let $H_1$ be an $(n,q,d)$-PHF such that $d^3<q\le 2d^3$ is a power of prime
and $d=r+s$. The following follows from Lemma~\ref{com}
\begin{lemma} \label{HF} If $H$ is a $(q,(r,s))$-CFF then $\{h_1(h)\ | h_1\in H_1, h\in H\}$
is $(n,(r,s))$-CFF of size $|H|\cdot |H_1|$.
\end{lemma}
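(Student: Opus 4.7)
The plan is to directly verify the CFF property of the composed family $H_1(H):=\{\,h\circ h_1 \,:\, h_1\in H_1,\ h\in H\,\}\subseteq \B^n$, where each $h\in H$ is viewed as a function $h:[q]\to\B$. The outline mirrors the folklore proof of Lemma~\ref{com} but with the role of the generic ``restriction'' function $f$ specialized to the CFF requirement.

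First I would fix an arbitrary choice of indices $1\le i_1<i_2<\cdots<i_d\le n$ and a target weight-$r$ pattern $J\subseteq [d]$. Since $|\{i_1,\ldots,i_d\}|=d$ and $H_1$ is an $(n,q,d)$-PHF, there exists $h_1\in H_1$ that is injective on $\{i_1,\ldots,i_d\}$; in particular $h_1(i_1),\ldots,h_1(i_d)$ are $d$ distinct elements of $[q]$.

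Next, write these $d$ images in increasing order as $j'_1<j'_2<\cdots<j'_d$, let $\pi$ be the permutation of $[d]$ defined by $h_1(i_t)=j'_{\pi(t)}$, and set $J':=\pi(J)\subseteq[d]$, which still has size $r$. Applying the $(q,(r,s))$-CFF property of $H$ to the indices $j'_1,\ldots,j'_d$ and the pattern $J'$ yields some $h\in H$ with $h(j'_{\pi(t)})=1$ iff $\pi(t)\in J'$, which by the definition of $\pi$ and $J'$ is equivalent to $t\in J$. Unwinding, the element $h\circ h_1\in \B^n$ satisfies $(h\circ h_1)(i_t)=1$ iff $t\in J$, which is exactly the pattern required by Definition~\ref{dfn-cff} at the indices $i_1,\ldots,i_d$.

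Since the choice of indices and pattern was arbitrary, $H_1(H)$ is an \cff; and by construction $|H_1(H)|\le |H_1|\cdot |H|$. The only bookkeeping subtlety — and the closest thing to an obstacle — is the reordering between the indices $h_1(i_t)$ (listed in the order induced by $h_1$) and the increasing order $j'_1<\cdots<j'_d$ required by the CFF definition; the permutation $\pi$ introduced above handles this cleanly. No new combinatorial input beyond the PHF property of $H_1$ and the CFF property of $H$ is needed.
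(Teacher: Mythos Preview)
Your proof is correct and is essentially the same argument the paper intends: the paper simply states that the lemma ``follows from Lemma~\ref{com}'', and what you have written is precisely the proof of Lemma~\ref{com} specialized to the CFF setting (as you yourself note). The only difference is that you spell out the composition argument and the bookkeeping permutation~$\pi$ explicitly, whereas the paper leaves it as an immediate consequence of the general restriction-family composition.
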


We now construct a $(q,(r,s))$-CFF.
Let $H_2$ be a $(q,r,k)$-splitter where $k<r$ will
be determined later. Let $H_3'[d']$ and $H_3''[d']$ be a $(q,d'-\lfloor r/k\rfloor,\lfloor r/k\rfloor)$-CFF
and $(q,d'-\lceil r/k\rceil,\lceil r/k\rceil)$-CFF respectively and define $H_3[d']:=H_3'[d']\cup H_3''[d']$
where $d\ge d'\ge \lceil r/k\rceil$.
For every $(h_1,\ldots,h_k)\in H_3[d_1]\times\cdots\times H_3[d_k]$ where
$d_1+\cdots+d_k=d$ and $g\in H_2$ define the function
$$H_{h_1,\ldots,h_k,g}(i)= h_{g(i)}(i).$$
We first prove
\begin{lemma}\label{FC}  The set of all $H_{h_1,\ldots,h_k,g}$ where
$(h_1,\ldots,h_k)\in H_3[d_1]\times\cdots\times H_3[d_k]$ for some
$d_1+\cdots+d_k=d$ and $g\in  H_2$ is a $(q,(r,s))$-CFF.
\end{lemma}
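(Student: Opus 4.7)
The plan is to verify the CFF property directly by tracing the adaptive scheme sketched in the proof overview. Fix arbitrary indices $1 \le i_1 < i_2 < \cdots < i_d \le q$ and an arbitrary $J \subset [d]$ with $|J|=r$; write $C = \{i_j : j \in J\}$ for the target-$1$ positions and $D = \{i_j : j \notin J\}$ for the target-$0$ positions, so $|C|=r$ and $|D|=s$. My goal is to exhibit a single $g \in H_2$ and a single tuple $(h_1,\ldots,h_k)$ from one of the products $H_3[d_1] \times \cdots \times H_3[d_k]$ so that $H_{h_1,\ldots,h_k,g}(i) = h_{g(i)}(i)$ takes value $1$ on every element of $C$ and $0$ on every element of $D$.

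First I will apply the $(q,r,k)$-splitter guarantee of $H_2$ to the set $C$ to obtain $g \in H_2$ with $|g^{-1}(t) \cap C| \in \{\lfloor r/k\rfloor, \lceil r/k\rceil\}$ for every bucket $t \in [k]$. I then set $d_t := |g^{-1}(t) \cap \{i_1,\ldots,i_d\}|$; since $g$ partitions $\{i_1,\ldots,i_d\}$ among the $k$ buckets, these counts automatically satisfy $\sum_{t=1}^k d_t = d$, and each $d_t$ dominates $|g^{-1}(t)\cap C|$. This $(d_1,\ldots,d_k)$ is exactly one of the tuples summed over in the definition of the family, so the product $H_3[d_1] \times \cdots \times H_3[d_k]$ is available to draw $(h_1,\ldots,h_k)$ from.

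Second, within each bucket $t$ the local requirement is just a small CFF problem on the $d_t$ indices lying in that bucket, with target weight $|g^{-1}(t) \cap C|$ equal to $\lfloor r/k\rfloor$ or $\lceil r/k\rceil$. I will invoke the matching summand $H_3'[d_t]$ or $H_3''[d_t]$ of $H_3[d_t]$ (reconciling the $r\le s$ convention in the definition of CFF by complementation if necessary) to obtain $h_t$ that reads $1$ on $C \cap g^{-1}(t)$ and $0$ on $D \cap g^{-1}(t)$. Composing through $H_{h_1,\ldots,h_k,g}(i) = h_{g(i)}(i)$ then assigns the correct bit to every element of $C\cup D$, giving the required witness.

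The only nontrivial bookkeeping is checking that the tuple $(d_1,\ldots,d_k)$ produced by the splitter lies in the admissible range $d \ge d_t \ge \lceil r/k\rceil$ (with the floor case handled by $H_3'$); this follows immediately from $\sum d_t = d$ and the splitter's balance guarantee on $C$. I therefore expect no real mathematical obstacle here: the lemma amounts to the verification that enumerating all splitter outputs together with all valid bucket-count tuples, and then applying a per-bucket CFF inside each bucket, supplies a witness for every prescribed $(i_1,\ldots,i_d,J)$.
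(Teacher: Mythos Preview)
Your proposal is correct and follows essentially the same argument as the paper: apply the splitter to the set of target-$1$ positions, read off the bucket occupancies $d_t$ summing to $d$, and in each bucket invoke the appropriate summand of $H_3[d_t]$ to fix the bits, then compose via $H_{h_1,\ldots,h_k,g}$. Your extra remarks about the $r\le s$ complementation convention and the admissible range for $d_t$ are minor bookkeeping points that the paper leaves implicit.
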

\begin{proof}  Consider any $1\le i_1<i_2<\cdots<i_d\le q$ and
any $(\xi_1,\ldots,\xi_d)$ of weight~$r$. Let $S=\{i_1,\ldots,i_d\}$.
Consider $I=\{ i_j\ |\ \xi_j=1\}$.
Since $H_2$ is a $(q,r,k)$-splitter there is $g\in H_2$ such that
$|g^{-1}(j)\cap I|\in \{ \lfloor r/k\rfloor, \lceil r/k\rceil\}$ for all $j=1,\ldots,k$.
Let $d_j= |g^{-1}(j)\cap S|$ for $j=1,\ldots,k$. Then
$d_1+d_2+\cdots+d_k=d$. Since $H_3[d_j]$ is a $(q,d_j-\lfloor r/k\rfloor,\lfloor r/k\rfloor)$-CFF
and $(q,d_j-\lceil r/k\rceil,\lceil r/k\rceil)$-CFF, there is $h_j\in H_3[d_j]$ such that
$h_j(g^{-1}(j)\cap I)=\{1\}$ and $h_j(g^{-1}(j)\cap (S\backslash I))=\{0\}$.

Now, if $\xi_\ell=1$ then $i_\ell\in I$. Suppose $g(i_\ell)=j$.
Then $i_\ell\in g^{-1}(j)\cap I$ and
$$H_{h_1,\ldots,h_k,g}(i_\ell)=h_j(i_\ell)\in h_j(g^{-1}(j)\cap I)=\{1\}.$$
If $\xi_\ell=0$ then $i_\ell\in S\backslash I$. Suppose $g(i_\ell)=j$.
Then $i_\ell\in g^{-1}(j)\cap (S\backslash I)$ and
$$H_{h_1,\ldots,h_k,g}(i_\ell)=h_j(i_\ell)\in h_j(g^{-1}(j)\cap (S\backslash I))=\{0\}.$$
\end{proof}

\subsection{Size of Construction I}
We now analyze the size of the construction. We will use $c_1,c_2,\ldots$ for constants
that are independent of $r,s$ and $n$.

Let $d^3<q\le 2d^3$ be a power of prime.
By Lemma~\ref{HF} and Lemma~\ref{FC} the size of the construction is
$$N:=|H_1|\cdot |H_2|\cdot \left| \bigcup_{d_1+\cdots+d_k=d} H_3[d_1]\times \cdots\times H_3[d_k] \right|$$
where $H_1$ is an $(n,q,d)$-PHF, $H_2$ is a $(q,r,k)$-splitter and $H_3[d']$ is a $(q,d'-\lceil r/k\rceil,\lceil r/k\rceil)$-CFF
and $(q,d'-\lfloor r/k\rfloor,\lfloor r/k\rfloor)$-CFF.

Let $z=16r\log k/(k\log(4r/k))$. By Lemma~\ref{splitter},\ref{phf} and \ref{cff} we have

\begin{eqnarray}
N&\le & c_1  \frac{d^2\log n}{\log d} \cdot r^{O(z)}\sigma(r,k)(\log d)\cdot \nonumber\\
&&\ \ \ \ \ \ \ \
\sum_{d_1+\cdots+d_k=d}\prod_{i=1}^k c_2\frac{d_ir}{k}{2d_i\lceil r/k\rceil\choose \lceil r/k\rceil}
\log d\nonumber\\
&\le& c_1d^2 r^{O(z)} \left(\frac{2\pi r}{k}\right)^{k/2} e^{k^2/(12r)}
(\log n)\cdot\nonumber\\
& & \ \ \ \ \ \ \ \
c_3^k\left(\frac{r\log d}{k}\right)^k\sum_{d_1+\cdots+d_k=d} \prod_{i=1}^k (2ed_i)^{r/k+1} d_i\label{St01}\\
&\le& c_4^k d^2 r^{O(z)}e^{k^2/(12r)} \left(\frac{r^3\log^2 d}{k^3}\right)^{k/2} (2e)^{r}
(\log n)\sum_{d_1+\cdots+d_k=d} \prod_{i=1}^k d_i^{r/k+2}\nonumber \\
&\le& c_5^k d^2 r^{O(z)}e^{k^2/(12r)}\left(\frac{r^3\log^2 d}{k^3}\right)^{k/2} (2e)^{r}
(\log n)\left(\frac{d}{k}\right)^k \max_{d_1+\cdots+d_k=d} \left(\prod_{i=1}^k d_i\right)^{r/k+2}
\label{St02} \\
&\le& c_6^k d^2  r^{O(z)}e^{k^2/(12r)}\left(\frac{r^3\log^2 d}{k^3}\right)^{k/2} (2e)^{r}
 \left(\frac{d}{k}\right)^{r+3k} \log n\label{St03}\\
&\le& c_6^k d^2 r^{O(z)}e^{k^2/(12r)}\left(\frac{r^3d^6\log^2 d}{k^9}\right)^{k/2} \left(\frac{2er}{k}\right)^r\left(\frac{d}{r}\right)^{r} \log n\nonumber
\end{eqnarray}
\blue{
(\ref{St01}) follows from (\ref{srke}) and the fact that ${a\choose b}\le (ea/b)^b$.
(\ref{St02}) follows from the fact that the number of $k$-tuples $(d_1,\ldots,d_k)$ such that
$d_1+\cdots+d_k=d$ is ${d+k-1\choose k-1}\le c^k(d/k)^k$ for some constant $c$.
(\ref{St03}) follows from the fact that $\max_{d_1+\cdots+d_k=d} \prod_{i=1}^k d_i=(d/k)^k$.
}

In summary, we have
$$N\le c_6^k d^2 r^{O(z)}e^{k^2/(12r)}\left(\frac{r^3d^6\log^2 d}{k^9}\right)^{k/2} \left(\frac{2er}{k}\right)^r\left(\frac{d}{r}\right)^{r} \log n.$$
Now assume $r>\log^2 d$ (see (\ref{assume})) and let $k:=r/\log\log d$.

Since
$$z\log r=\frac{16r\log k\log r}{k\log(4r/k)}\le c_7\frac{\log^2 r\log\log d}{\log\log\log d}=o(r),$$
$$\frac{k^2}{12r}=\frac{r}{12(\log\log d)^2}=o(r)$$
$$\left(\frac{r^3d^6\log^2 d}{k^9}\right)^{k/2}=c_8^r\left(\frac{d}{r}\right)^{3k}=c_8^r\left(\frac{d}{r}\right)^{o(r)},$$
and $d/r \geq 2$, we have,
$$N\le (c_9\log\log d)^r \left(\frac{d}{r}\right)^{r(1+o(1))} \log n.$$

This is
$$\left(\frac{d}{r}\right)^{r(1+o(1))} \log n= N(r,s)^{1+o(1)}\log n$$ when $$\log^2d\le r\le\frac{d}{(\log\log d)^{\omega(1)}}.
$$

\section{The Second Construction}\label{sec:secondConst}\label{Second}
In the second construction we replace each component $H_3[d_1]\times \cdots \times H_3[d_k]$ with
another construction that is built from scratch and therefore has smaller size.
The main idea is the following: rather than taking all possible functions
in each $(q,d_i-(r/k),r/k)$-CFF in each bucket, we construct what we call a ``multi-CFF''.
\blue{We first construct a dense ``separating hash family'' that maps the
entries to a smaller domain $[q]$ and separates entries that are supposed to
be assigned zero from those that are suppose to be assigned one (i.e.,
they are mapped to disjoint sets).  This is done in each bucket.
We then use the hitting set for dense combinatorial rectangles of Linial et. al, \cite{LLSZ97}, to give
a separating hash family for all the buckets. Then we build a multi-CFF by assigning
$0$ and $1$ to every possible two disjoint sets.}
We proceed with the details of the second construction.

\subsection{Preliminary Results For the Second Construction}

Let $H$ be a set of functions $h:[n]\to [q]$.
We say that $H$ is a $(1-\epsilon)$-{\it dense} $(n,q,(\rho_1,\rho_2))$-Separating Hash Family
(SHF) if for every
two disjoint subsets $S_1,S_2\subseteq [n]$ of sizes $|S_1|=\rho_1,|S_2|=\rho_2$ there are at least
$(1-\epsilon)|H|$ hash
functions $h\in H$ such that $h(S_1)\cap h(S_2)=\O$.

The following lemma follows from~\cite{B14b}.
\begin{lemma}~\label{Den} Let $q$ be a power of prime. If $\epsilon>4(\rho_1\rho_2+1)/q$ then
there is a $(1-\epsilon)$-dense $(n,q,(\rho_1,\rho_2))$-SHF
of size
$$O\left(\frac{\rho_1\rho_2\log n}{\epsilon\log(\epsilon q/e(\rho_1\rho_2+1))}\right)$$
that can be constructed in linear time.
\end{lemma}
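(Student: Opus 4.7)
The plan is to adapt the polynomial-hashing construction used to prove Lemma~\ref{phf}, relaxing the ``single separating function'' requirement of a PHF to the weaker $(1-\epsilon)$-density requirement. Throughout, I would work over the prime-power field $\mathbb{F}_q$ and identify $[n]$ with a subset of $\mathbb{F}_Q$ for some prime power $Q\geq n$, then reduce evaluations modulo $q$.

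The main observation is that the family $H_0=\{h_p(x)=p(x)\bmod q : p\in\mathbb{F}_Q[x],\ \deg p\leq 1\}$ already has the right collision profile. For any two distinct $i,j\in[n]$, a uniformly random $p$ satisfies $\Pr[p(i)=p(j)]=1/q$, so for any disjoint sets $S_1,S_2$ with $|S_l|=\rho_l$, a union bound over the $\rho_1\rho_2$ pairs in $S_1\times S_2$ yields that at most a $\rho_1\rho_2/q < \epsilon/4$ fraction of $H_0$ fails to separate $(S_1,S_2)$. Thus $H_0$ is itself $(1-\epsilon/4)$-dense, but its size is far too large.

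To shrink it, I would pick $N$ polynomials independently at random from $H_0$. For a fixed $(S_1,S_2)$, the number $X$ of chosen polynomials that fail satisfies $\E[X]\leq p_0 N$ with $p_0=\rho_1\rho_2/q$, and a standard multiplicative Chernoff bound gives
$$\Pr[X>\epsilon N]\leq \binom{N}{\lceil\epsilon N\rceil}p_0^{\lceil\epsilon N\rceil}\leq\left(\frac{e\,\rho_1\rho_2}{\epsilon q}\right)^{\epsilon N}.$$
Since there are at most $n^{\rho_1+\rho_2}$ choices of $(S_1,S_2)$, taking
$$N=O\!\left(\frac{\rho_1\rho_2\log n}{\epsilon\log(\epsilon q/(e(\rho_1\rho_2+1)))}\right)$$
makes the overall failure probability less than $1$ and so establishes existence of a family of the claimed size.

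To turn this existence argument into a linear-time \emph{deterministic} construction, I would replace the fully random sample by a $(\rho_1+\rho_2)$-wise independent distribution of polynomial coefficients (as in~\cite{NSS95,B14b}) and apply the method of conditional expectations / pessimistic estimators to fix the $N$ polynomials one at a time; this is exactly the engineering behind Lemma~\ref{phf}. The main obstacle I expect is this derandomization: to keep the running time at $\tilde O(N\cdot n)$ rather than polynomial in the universe size we must exploit the algebraic structure of the sample space (Reed--Solomon-like codes), and tracking that the resulting construction matches the precise denominator $\log(\epsilon q/(e(\rho_1\rho_2+1)))$ requires careful handling of the small-$\rho_1\rho_2$ edge cases, which is where the ``$+1$'' and the factor $e$ enter. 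Extending the ``base'' domain to full $[n]$ (when $n>q$) is then routine by the composition trick of Lemma~\ref{com}: composing with the PHF of Lemma~\ref{phf} costs only a multiplicative factor that is absorbed into the big-$O$.
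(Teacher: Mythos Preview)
The paper does not supply its own proof of this lemma; it simply records that the statement follows from \cite{B14b}. So there is no in-paper argument to compare against directly, but your sketch departs from what \cite{B14b} (and the proof of Lemma~\ref{phf} there) actually does, and the departure introduces a real gap.

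The construction in \cite{B14b} is a \emph{direct explicit} one, not a derandomization of a random sample. Roughly, one represents elements of $[n]$ as degree-$<D$ polynomials over $\mathbb{F}_q$ and takes the evaluation family $\{f\mapsto f(a):a\in\mathbb{F}_q\}$; since two distinct such polynomials agree on at most $D-1$ points, a union bound over the $\rho_1\rho_2$ cross pairs shows that at most a $\rho_1\rho_2(D-1)/q$ fraction of the family is bad for any fixed $(S_1,S_2)$. Choosing $D$ appropriately (and composing to reach arbitrary $n$) then yields the stated size, and linear-time constructibility is immediate because the family is written down explicitly with no search. In particular, your claim that conditional expectations ``is exactly the engineering behind Lemma~\ref{phf}'' is not accurate.

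In your route, the probabilistic existence argument is fine, but the derandomization is where the difficulty lies, and it is not resolved. Fixing the $N$ sampled functions one at a time by conditional expectations requires evaluating a pessimistic estimator that sums over all pairs $(S_1,S_2)$, of which there are $n^{\Theta(\rho_1+\rho_2)}$; this is nowhere near $\tilde O(Nn)$. Replacing $H_0$ by a $(\rho_1{+}\rho_2)$-wise independent family addresses the wrong source of randomness: what you need concentrated is the \emph{count of bad functions among the $N$ draws}, and limited independence among the coordinates of a single hash function does not give that tail bound. Two smaller points: the operation ``$p(x)\bmod q$'' for $p\in\mathbb{F}_Q[x]$ with $Q$ a non-prime prime power is not well-defined as written; and Lemma~\ref{com} composes with a PHF, which does not preserve the $(1-\epsilon)$-\emph{density} (most PHF functions are not injective on a given $S_1\cup S_2$), so the final ``routine'' extension step also needs a different argument.
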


Let $R\subseteq [t]^k$ be
a set of the form $R_1\times \ldots\times R_k$, where
$R_i\subseteq [t]$.
We say $R$ is a {\it combinatorial rectangle with sidewise density $\gamma$},
if for every $i\in [t]$, $|R_i|\geq \gamma \cdot t$.
A set $H\subseteq [t]^k$ is called a \emph{hitting set for rectangles with sidewise density $\gamma$}
if for every set $R\subseteq  [t]^k$
that is a combinatorial rectangle of sidewise density $\gamma$,
$R\cap H \neq \emptyset$.

Linial et. al \cite{LLSZ97} gave the following construction of a hitting set for combinatorial rectangles.
\begin{lemma}\label{thm:LLSZhittingset}
A hitting set  for rectangles $H\subseteq [t]^k$ with sidewise density $1/3$
of size $|H|= t^{O(1)}\cdot  2^{O(k)}$ can be constructed
in time $t^{O(1)}\cdot  2^{O(k)}$.
\end{lemma}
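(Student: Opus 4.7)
The plan is to take the hitting set to be the set of all walks of length $k-1$ on an explicit constant-degree expander graph with vertex set $[t]$. Concretely, let $G = ([t], E)$ be a $D$-regular expander with second-largest eigenvalue in absolute value at most $\lambda < 1/12$; such graphs exist with $D$ a constant depending only on $\lambda$, and can be built in time $\mathrm{poly}(t)$ via either Ramanujan-type algebraic constructions or the zig-zag product of Reingold, Vadhan, and Wigderson. Define
$$H := \left\{(v_0, v_1, \ldots, v_{k-1}) \in [t]^k \;:\; (v_j, v_{j+1}) \in E \text{ for every } 0 \le j \le k-2 \right\}.$$
Enumerating $H$ by iterating over $v_0 \in [t]$ and the $D$ neighbor choices at each of the $k-1$ subsequent steps yields $|H| = t \cdot D^{k-1} = t^{O(1)} \cdot 2^{O(k)}$, produced in time $O(|H| \cdot k) = t^{O(1)} \cdot 2^{O(k)}$, matching the claim.

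For correctness, fix any combinatorial rectangle $R = R_1 \times \cdots \times R_k$ with $|R_i| \ge t/3$ for each $i$, and consider a random walk $(v_0, \ldots, v_{k-1})$ in $G$ with $v_0$ uniform. By the standard analysis of random walks on expanders, obtained by iteratively applying the expander mixing lemma across the sets $R_1, R_2, \ldots, R_k$, one has $\Pr[v_i \in R_i \text{ for all } i] \ge (1/3 - \lambda)^k > 0$. Hence at least one walk in $H$ lies in $R$, proving that $H$ is a hitting set.

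The main obstacle I expect is producing the expander on exactly $[t]$, since algebraic constructions typically yield expanders only on restricted sizes (such as $p+1$ for primes $p$). One fix is to round $t$ up to the next admissible $t' \in [t, 2t]$, build the expander on $[t']$, treat each $R_i$ as a subset of $[t']$ of relative density $\ge 1/6$, and pick a smaller $\lambda$ (i.e., larger $D$) so that the walk argument still yields walks landing inside $[t]^k$; since $D$ and $\lambda$ remain constants, the size and runtime stay at $t^{O(1)} \cdot 2^{O(k)}$. Alternatively, the zig-zag product construction gives explicit constant-degree expanders on essentially any vertex set size, bypassing this step altogether.
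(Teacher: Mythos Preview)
The paper does not prove this lemma at all: it is quoted verbatim as a black-box result of Linial, Luby, Saks and Zuckerman~\cite{LLSZ97}. So there is no ``paper's proof'' to compare against, only the original LLSZ construction, which is quite different from yours --- they build the hitting set out of $k$-wise $\epsilon$-biased distributions combined with a recursive dimension-reduction, not expander walks.

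Your expander-walk construction is a legitimate and well-known alternative route to the same bound, and the size and construction-time analysis are fine. One point deserves more care: the inequality $\Pr[v_i\in R_i\text{ for all }i]\ge(1/3-\lambda)^k$ is \emph{not} obtained by ``iteratively applying the expander mixing lemma''. The mixing lemma controls one transition at a time, but after conditioning on $v_0\in R_1,\ldots,v_{j-1}\in R_j$ the distribution of $v_{j-1}$ can be far from uniform in $\ell_\infty$, so a naive induction collapses. The standard argument instead tracks the unnormalised vectors $p_j=P_{R_{j+1}}A\cdots A P_{R_1}\mathbf{1}/t$ and bounds simultaneously $\|p_j\|_1$ (the surviving mass) and $\sqrt{t}\,\|p_j\|_2$ (the deviation from uniform), using the spectral gap to show the $\ell_2$ part stays controlled; see e.g.\ the treatment of the hitting property of expander walks in Vadhan's \emph{Pseudorandomness} monograph or Hoory--Linial--Wigderson. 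With $\lambda$ a small enough absolute constant this yields $\Pr\ge c^{k}$ for some $c>0$, which is all you need. So the approach is correct, but the one-line justification you gave would not survive as written; either cite the hitting lemma for expander walks directly, or sketch the joint $\ell_1/\ell_2$ induction.

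Your handling of the ``exact vertex count $t$'' issue is fine for this level of detail; rounding up to $t'\le 2t$ and halving the density threshold works, and with the zig-zag construction one can indeed get constant-degree expanders of essentially any size.
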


Let $H$ be a set of functions $h:[k]\times[n]\to \{0,1\}$.
We say that $H$ is an
$(n,((\rho_{1,1},\rho_{1,2}),\ldots,(\rho_{k,1},\rho_{k,2})))$-Multi-CFF (MCFF)
if for every $k$ pairs
of disjoint subsets $(S_{i,1},S_{i,2})\subseteq [n]$ of sizes $|S_{i,1}|=\rho_{i,1},|S_{i,2}|=\rho_{i,2}$,
$i=1,\ldots,k$, there is $h\in H$ such that
$h(i,S_{i,1})=1$ and $h(i,S_{i,2})=0$ for all $i=1,\ldots,k$.

We now prove
\begin{lemma} \label{MCFF} There is an $(n,((\rho_{1,1},\rho_{1,2}),\ldots,(\rho_{k,1},\rho_{k,2})))$-MCFF
of size
$$( 2^k(\log n)\max_i \rho_{i,1}\rho_{i,2})^{O(1)} \prod_{i=1}^k {48\rho_{i,1}\rho_{i,2}\choose \rho_{i,1}}$$
that can be constructed in time $n\times poly((\max_i\rho_{i,1}\rho_{i,2})2^k\log n)$
\end{lemma}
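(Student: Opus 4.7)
The plan is to build the MCFF in three stages, matching the overview sketched at the start of Section~\ref{sec:secondConst}: first, for each coordinate $i \in [k]$, construct a dense separating hash family mapping $[n]$ into a small domain $[q_i]$; second, use a hitting set for combinatorial rectangles to simultaneously hit ``good'' hash functions across all $k$ coordinates; third, enumerate $0/1$-labelings of each $[q_i]$ to realize the final MCFF functions.

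For stage one, fix $\epsilon = 1/3$ and for each $i$ let $q_i$ be the smallest prime power at least $48 \rho_{i,1}\rho_{i,2}$. By Lemma~\ref{Den}, in linear time we construct a $(2/3)$-dense $(n, q_i, (\rho_{i,1}, \rho_{i,2}))$-SHF $H_i$ of size $O(\rho_{i,1}\rho_{i,2} \log n)$. Pad each $H_i$ (by cyclic repetition of its elements) to a common size $t = \Theta((\max_i \rho_{i,1}\rho_{i,2}) \log n)$, which preserves the $(2/3)$-dense property up to constants. For stage two, apply Lemma~\ref{thm:LLSZhittingset} to obtain a hitting set $\mathcal{H} \subseteq [t]^k$ for rectangles of sidewise density $1/3$, of size $t^{O(1)} \cdot 2^{O(k)}$. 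Given any instance $\{(S_{i,1}, S_{i,2})\}_{i \in [k]}$, the set of indices $j \in [t]$ such that the $j$-th function in $H_i$ separates $(S_{i,1}, S_{i,2})$ has density at least $2/3 > 1/3$; the product of these $k$ ``good index'' sets is a combinatorial rectangle of sidewise density at least $1/3$, so $\mathcal{H}$ must contain at least one tuple $(h_1, \dots, h_k) \in H_1 \times \cdots \times H_k$ with $h_i(S_{i,1}) \cap h_i(S_{i,2}) = \emptyset$ for every $i$.

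For stage three, for every tuple $(h_1, \dots, h_k) \in \mathcal{H}$ and every choice of labeling sets $(T_1, \dots, T_k)$ with $T_i \in \binom{[q_i]}{\rho_{i,1}}$, add to the MCFF the function $h(i, x) := \mathbf{1}[h_i(x) \in T_i]$. To verify the MCFF property for any instance, select a tuple in $\mathcal{H}$ that separates every $(S_{i,1}, S_{i,2})$; since $h_i(S_{i,1})$ is disjoint from $h_i(S_{i,2})$, has size at most $\rho_{i,1}$, and $q_i \geq \rho_{i,1} + \rho_{i,2}$, there exists an enumerated $T_i$ containing $h_i(S_{i,1})$ and disjoint from $h_i(S_{i,2})$ (fill $T_i$ with $h_i(S_{i,1})$ and pad with elements of $[q_i] \setminus (h_i(S_{i,1}) \cup h_i(S_{i,2}))$), yielding $h(i, S_{i,1}) = 1$ and $h(i, S_{i,2}) = 0$ for all $i$ simultaneously. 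The total size is $t^{O(1)} \cdot 2^{O(k)} \cdot \prod_{i=1}^k \binom{q_i}{\rho_{i,1}}$, matching the claimed bound, and the construction time follows by combining the linear-time construction of each $H_i$ with Lemma~\ref{thm:LLSZhittingset}'s $t^{O(1)} 2^{O(k)}$ runtime and the cost of iterating the labelings. The main subtlety is ensuring that the padding in stage one does not degrade the density below $1/3$ so Lemma~\ref{thm:LLSZhittingset} still applies; cyclic repetition of $H_i$ resolves this cleanly since it scales good and bad functions by the same factor.
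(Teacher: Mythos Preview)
Your proposal is correct and follows essentially the same three-stage approach as the paper: dense separating hash families in each coordinate, a hitting set for combinatorial rectangles to align good choices across coordinates, and enumeration of $\rho_{i,1}$-subsets of $[q_i]$ to produce the final $0/1$ functions. The paper uses density $1/2$ rather than your $2/3$, and glosses over the padding to a common size $t$ that you handle explicitly via cyclic repetition, but these are cosmetic differences.

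One small slip: you take $q_i$ to be the smallest prime power \emph{at least} $48\rho_{i,1}\rho_{i,2}$, so $\binom{q_i}{\rho_{i,1}}$ can exceed $\binom{48\rho_{i,1}\rho_{i,2}}{\rho_{i,1}}$, and the product over $i$ then overshoots the stated bound by a factor that is not absorbed by the $(2^k(\log n)\max_i \rho_{i,1}\rho_{i,2})^{O(1)}$ prefactor. The paper instead picks a prime power $q_i$ with $24\rho_{i,1}\rho_{i,2} < q_i \le 48\rho_{i,1}\rho_{i,2}$, which keeps $\binom{q_i}{\rho_{i,1}} \le \binom{48\rho_{i,1}\rho_{i,2}}{\rho_{i,1}}$ while still satisfying the hypothesis of Lemma~\ref{Den}. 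Adjusting your choice of $q_i$ accordingly fixes this.
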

\begin{proof} We first choose integers $q_i$, $i=1,\ldots,k$ that are
powers of primes $24 \rho_{i,1}\rho_{i,2}< q_i\le 48 \rho_{i,1}\rho_{i,2}$.
Since $4(\rho_{i,1}\rho_{i,2})/q_i<1/2$, by Lemma~\ref{Den}, there is a $1/2$-dense $(n,q_i,(\rho_{i,1}\rho_{i,2}))$-SHF $H_i$
of size $|H_i|=t=O((\max_i $ $\rho_{i,1}\rho_{i,2}) (\log n))$. Let $H_i=\{h_{i,1},\ldots,h_{i,t}\}$.
Let $G\subseteq [t]^k$ be a hitting set for rectangles with sidewise density $1/3$
of size $|G|= t^{O(1)}\cdot  2^{O(k)}$. By Lemma~\ref{thm:LLSZhittingset} this set
can be constructed in time $t^{O(1)}\cdot  2^{O(k)}=poly((\max_i\rho_{i,1}\rho_{i,2})2^k\log n).$

Now for every $g\in G$ and every $R_i\subset [q_i]$,
of size $|R_i|= \rho_{i,1}$, $i=1,\ldots,k$,
consider the functions $h_{1,g_1},h_{2,g_2},\ldots,h_{t,g_t}$
and define $h:[k]\times [n]\to \{0,1\}$ as follows: $h(i,j)=1$
iff $h_{i,g_i}(j)\in R_i$.

To show that the set of all such $h$ is an $(n,((\rho_{1,1},\rho_{1,2}),\ldots,(\rho_{k,1},\rho_{k,2}))$-MCFF,
consider $k$ pairs
of disjoint subsets $(S_{i,1},S_{i,2})\subseteq [n]$ of sizes $|S_{i,1}|=\rho_{i,1},|S_{i,2}|=\rho_{i,2}$,
$i=1,\ldots,k$. Let $H_i^*=\{h'\in H_i\ |\ h'(S_{i,1})\cap h'(S_{i,2})=\O\}$.
Since $H_i$ is a $1/2$-dense $(n,q_i,(\rho_{i,1}\rho_{i,2}))$-SHF, we have $|H_i^*|\ge |H_i|/2$.
Since $G\subseteq [t]^k$ is a hitting set for rectangles with sidewise density $1/3$
there is $g\in G$ such that $h_{i,g_i}\in H_i^*$ for all $i=1,\ldots, k$. Let $R_i$
be any set of size $\rho_{i,1}$ such that $h_{i,g_i}(S_{i,1})\subseteq R_i\subseteq [q_i]\backslash h_{i,g_i}(S_{i,2})$.
Then the function $h$ defined above satisfies {the following}: since $h_{i,g_i}(S_{i,1})\subseteq R_i$ we have $h(i,S_{i,1})={1}$
and since $R_i\cap h_{i,g_i}(S_{i,2})=\O$ we have $h(i,S_{i,2})={0}$ for all $i=1,\ldots,k$.

The number of such functions $h$ is
$$|G|\prod_{i=1}^k {q_i\choose |R_i|}.$$
\end{proof}
\subsection{Analysis for Construction II}
In the analysis we just replace the size of $H_3[d_1]\times \cdots \times H_3[d_k]$ in the analysis
of construction I to the new size of a $(q,((d_1-r/k,r/k),\ldots,(d_k-r/k,r/k)))$-MCFF in Lemma~\ref{MCFF}
where $d^3<q\le 2d^3$ and get
\begin{eqnarray*}
N&\le & c_1  \frac{d^2\log n}{\log d} \cdot r^{O(z)}\sigma(r,k)(\log d)\cdot \\
&&\sum_{d_1+\cdots+d_k=d} 2^{O(k)}(\log d)^{O(1)}\left(\frac{dr}{k}\right)^{O(1)}\prod_{i=1}^k{c_2d_i\lceil r/k\rceil\choose \lceil r/k\rceil}\\
&\le& c_3^k d^{O(1)} r^{O(z)} \left(\frac{2\pi r}{k}\right)^{k/2} e^{k^2/(12r)}
\log n\sum_{d_1+\cdots+d_k=d} \prod_{i=1}^k (c_4d_i)^{r/k+1} \\
&\le& c_4^k d^{O(1)} r^{O(z)}e^{k^2/(12r)} \left(\frac{r}{k}\right)^{k/2} c_5^{r}
\log n\sum_{d_1+\cdots+d_k=d} \prod_{i=1}^k d_i^{r/k+1} \\
&\le& c_6^k d^{O(1)} r^{O(z)}e^{k^2/(12r)}\left(\frac{r}{k}\right)^{k/2} c_5^{r}
\log n\left(\frac{d}{k}\right)^k \max_{d_1+\cdots+d_k=d} \left(\prod_{i=1}^k d_i\right)^{r/k+1} \\
&\le& c_6^k d^{O(1)}  r^{O(z)}e^{k^2/(12r)}\left(\frac{r}{k}\right)^{k/2} c_5^{r}
 \left(\frac{d}{k}\right)^{r+2k} \log n\\
&\le& c_6^k d^{O(1)} r^{O(z)}e^{k^2/(12r)}\left(\frac{r}{k}\right)^{k/2} c_5^{r}\left(\frac{r}{k}\right)^{r+2k}\left(\frac{d}{r}\right)^{r+2k} \log n\\
\end{eqnarray*}

Now let $r>\log^2 d$ and $k=r/\varphi(d)$ where $\varphi(d)<\log d$ and $\varphi(d)=\omega(1)$. Then $k=o(r)$
$$c_4^kd^{O(1)}e^{k^2/(12r)}\left(\frac{r}{k}\right)^{k/2}=2^{O\left(\frac{r\log\varphi(d)}{\varphi(d)}\right)}=2^{o(r)}$$
and $$r^{O(z)}=r^{O((r/k)\log k/\log(2r/k))}=2^{O(\varphi(d)\log d/\log\varphi(d))}=2^{o(r)}.$$
Therefore
$$N=(c_7 \varphi(d))^{r+o(r)}\left(\frac{d}{r}\right)^{r+o(r)}{\log n}$$ which is
$$N(r,s)^{r(1+o(1))}\log n$$
when
$$r=\frac{d}{\varphi(d)^{\omega(1)}}.$$ Since $\varphi(d)<\log d$ is any function that
satisfies $\omega(1)$, the above is true for any
$$\frac{d}{(\log d)^{\omega(1)}}\le r\le \frac{d}{\omega(1)}.$$

\appendix

\section{Technical results for the proof of Lemma \ref{splitter}}\label{app:splitters}
In this appendix we give some proofs of technical results needed for Lemma \ref{splitter}.

Here we assume that $r$ and $k$ are large enough integers

\begin{lemma}\label{LL1} Let $z=16r\log k/(k\log (4r/k))$. Then $z$ is a monotonically
decreasing function in $k$ in the interval $[\sqrt{r},r]$. In particular, $16\sqrt{r}\ge z\ge 8\log r$.
\end{lemma}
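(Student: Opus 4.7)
The plan is to prove monotonicity by inspecting the logarithmic derivative of $z$ as a function of $k$, and then obtain the two numerical bounds simply by plugging in the endpoints $k=\sqrt{r}$ and $k=r$. Since $z(k)$ is a product/quotient of elementary factors, the logarithmic derivative is cleaner than the derivative itself.

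First I would write
\[
\log z(k) \;=\; \log(16r) + \log\log k - \log k - \log\log(4r/k).
\]
Using $\frac{d}{dk}\log(4r/k) = -1/k$, differentiating term by term gives
\[
\frac{d \log z}{dk} \;=\; \frac{1}{k\log k} - \frac{1}{k} + \frac{1}{k\log(4r/k)} \;=\; \frac{1}{k}\left[\frac{1}{\log k} + \frac{1}{\log(4r/k)} - 1\right].
\]
To conclude $z$ is monotonically decreasing on $[\sqrt{r},r]$, it therefore suffices to show $\tfrac{1}{\log k} + \tfrac{1}{\log(4r/k)} < 1$ throughout this interval. For $k\in[\sqrt{r},r]$ one has $\log k \ge \tfrac{1}{2}\log r$, so $1/\log k \le 2/\log r$; and $k\le r$ forces $4r/k\ge 4$, so $1/\log(4r/k) \le 1/\log 4 = 1/2$. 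Both estimates combine to give a bound of $1/2 + o(1)$, which is strictly less than $1$ for $r$ large enough; this is exactly the ``$r$ large enough'' hypothesis already stated at the top of the appendix.

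Finally I would evaluate $z$ at the endpoints. At $k = r$ we have $\log(4r/k) = \log 4$ and $\log k = \log r$, so
\[
z(r) \;=\; \frac{16r\log r}{r\cdot \log 4} \;=\; 8\log r.
\]
At $k = \sqrt{r}$ we get $\log k = \tfrac{1}{2}\log r$ and $\log(4r/k) = \log 4 + \tfrac{1}{2}\log r$, so
\[
z(\sqrt{r}) \;=\; \frac{16 r \cdot \tfrac12\log r}{\sqrt{r}\,(\log 4 + \tfrac12\log r)} \;=\; \frac{8\sqrt{r}\,\log r}{\log 4 + \tfrac12\log r} \;\le\; \frac{8\sqrt{r}\,\log r}{\tfrac12\log r} \;=\; 16\sqrt{r}.
\]
Combined with the monotonicity established above, this yields $16\sqrt{r}\ge z(k)\ge 8\log r$ for all $k\in[\sqrt{r},r]$, as required.

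The only real subtlety is making sure the inequality $\tfrac{1}{\log k} + \tfrac{1}{\log(4r/k)} < 1$ actually holds on the full interval; the boundary case $k\to r$ is the tightest (there $\log(4r/k)\to \log 4 = 2$ so $1/\log(4r/k)\to 1/2$), which is why the other term must be forced below $1/2$ — this is where the ``$r$ large enough'' assumption is used. Beyond that, the argument is a direct two-line derivative computation plus endpoint evaluation.
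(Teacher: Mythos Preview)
Your approach is essentially the same as the paper's: both establish monotonicity via the derivative and then read off the bounds at the endpoints $k=\sqrt r$ and $k=r$. The paper sets $\partial z/\partial k=0$, obtains the quadratic $\ln^2 x-(\ln 4r)\ln x+\ln 4r=0$ in $\ln x$, and checks that its two roots lie outside $[\sqrt r,r]$; your bracket condition $\tfrac{1}{\ln k}+\tfrac{1}{\ln(4r/k)}<1$ is, after clearing denominators, exactly the statement that $\ln k$ lies strictly between those two roots, so the two arguments are the same computation viewed from opposite ends.

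One small slip to clean up: the identity $\tfrac{d}{dk}\log(4r/k)=-1/k$ that you use forces $\log=\ln$ in your derivative formula, yet two lines later you write $1/\log 4 = 1/2$, which is base~$2$. With natural logarithms the tight value at $k=r$ is $1/\ln 4\approx 0.72$, not $1/2$; the needed inequality $\tfrac{2}{\ln r}+\tfrac{1}{\ln 4}<1$ still holds for $r$ large enough, so the argument survives, but you should commit to one base throughout (the endpoint evaluations $z(r)=8\log r$ and $z(\sqrt r)\le 16\sqrt r$ are insensitive to the base since the two logs in the definition of $z$ cancel).
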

\begin{proof} From $\partial z/\partial k|_{k=x}=0$ we get $\ln^2 x-(\ln 4r)\ln x +\ln (4r)=0$.
This gives two solutions $x_0,x_1$ for $x$. One satisfies $\ln x_0>\ln r$ and therefore $x_0>r$
and the second $\ln x_1<2$ and therefore $x_1<e^2<\sqrt{r}$. This implies that
the function is monotone in the interval $[\sqrt{r},r]$. Now since $z|_{k=\sqrt{r}}=16\sqrt{r}$
and $z|_{k=r}=8\log r$ the result follows.
\end{proof}

We remind the reader that
$$\sigma(r,k):=\left(\frac{2\pi r}{k}\right)^{k/2}e^{k^2/(12r)}.$$
\begin{lemma}\label{LL2} Let $z=16r\log k/(k\log (4r/k))>8$ and $k=\omega(\sqrt{r})$.
$$\sqrt{\frac{r}{z}}\cdot \sigma\left(\frac{r}{z},\frac{k}{z}\right)
\left(\frac{r}{z}\right)^{4r/z}\log \frac{r}{z}=o(\sigma(r,k)).$$
\end{lemma}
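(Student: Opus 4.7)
The plan is to take logarithms and show
\[\log\bigl(\sqrt{r/z}\cdot\sigma(r/z,k/z)\cdot (r/z)^{4r/z}\log(r/z)\bigr) - \log\sigma(r,k) \to -\infty.\]
The key first ingredient is the clean algebraic identity
\[\log\sigma(r/z,k/z) = \tfrac{1}{z}\log\sigma(r,k),\]
which follows by plugging $(r/z,k/z)$ into the definition $\sigma(r',k') = (2\pi r'/k')^{k'/2}e^{(k')^2/(12r')}$ and noting that the ratio $(r/z)/(k/z) = r/k$ is preserved in both factors. Thus the contribution of $\sigma(r/z,k/z)$ is only $(1/z)\log\sigma(r,k)$, giving a head start of $(1-1/z)\log\sigma(r,k)$; the task reduces to arguing that $(4r/z)\log(r/z)$ does not eat up this head start.

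Next I would substitute $z = 16r\log k/(k\log(4r/k))$ to obtain
\[\frac{4r}{z} = \frac{k\log(4r/k)}{4\log k}, \qquad \frac{r}{z} = \frac{k\log(4r/k)}{16\log k}.\]
Since $k = \omega(\sqrt r)$, we have $\log k \geq (1/2+o(1))\log r$, so $\log(r/z) = (1+o(1))\log k$ and hence $(4r/z)\log(r/z) = (1+o(1))\cdot (k/4)\log(4r/k)$. Using Lemma~\ref{LL1} for $z \geq 8$ (and thus $1-1/z \geq 7/8$), the difference I need to bound from below becomes
\[\bigl(1-\tfrac{1}{z}\bigr)\log\sigma(r,k) - (4r/z)\log(r/z) \;\geq\; \tfrac{7}{16}k\log\tfrac{2\pi r}{k} + \tfrac{7}{96}\tfrac{k^2}{r} - (1+o(1))\tfrac{k}{4}\log\tfrac{4r}{k},\]
which expands to $\bigl(\tfrac{7}{16}-\tfrac{1}{4}-o(1)\bigr)k\log(4r/k) + \tfrac{7}{96}k^2/r - O(k)$.

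The final step is to verify that this lower bound is $\omega(\log r)$ throughout the range $\sqrt r \ll k \leq r$, because the polynomial prefactor $\sqrt{r/z}\log(r/z)$ has logarithm only $O(\log r)$. When $k = o(r)$, $\log(4r/k)\to\infty$ and the first term yields $\Omega(k) = \omega(\sqrt r)$; when $k = \Theta(r)$, $\log(4r/k)$ is only $O(1)$ but then $k^2/r = \Omega(r) = \omega(\log r)$ takes over. The main obstacle is precisely this borderline regime $k = \Theta(r)$: the log comparison becomes tight there, and one must fall back on the Gaussian-type correction $e^{k^2/(12r)}$ inside $\sigma$ to supply the needed gap; managing the constants so that the leading coefficient in $k\log(4r/k)$ comes out strictly positive, while simultaneously accounting for the $(1-1/z)$ loss and the $O(k)$ error, is the delicate computational point.
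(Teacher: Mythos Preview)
Your proposal is correct and shares the paper's key opening move: the scaling identity $\sigma(r/z,k/z)=\sigma(r,k)^{1/z}\le\sigma(r,k)^{1/8}$. Where you diverge is in handling the polynomial factor $\sqrt{r/z}\,(r/z)^{4r/z}\log(r/z)$. You take logarithms, substitute for $z$, and then split into the regimes $k=o(r)$ and $k=\Theta(r)$, worrying (unnecessarily, as it turns out, since $(7/16)\log(2\pi)-(1/4)\log 4>0$) about whether the Gaussian factor $e^{k^2/(12r)}$ is needed to rescue the constants in the second regime.

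The paper avoids all of this with a single clean chain. It first absorbs the lower-order factors via $\sqrt{r/z}\,(r/z)^{4r/z}\log(r/z)\le (r/z)^{5r/z}$, then uses $r/z = \tfrac{k\log(4r/k)}{16\log k}\le k$ to write
\[
(r/z)^{5r/z}\;\le\; k^{5r/z}\;=\;k^{\frac{5k\log(4r/k)}{16\log k}}\;=\;(4r/k)^{5k/16}\;\le\;(2\pi r/k)^{5k/16}\;\le\;\sigma(r,k)^{5/8}.
\]
Multiplying by $\sigma(r,k)^{1/8}$ gives $\sigma(r,k)^{3/4}=o(\sigma(r,k))$, with no case analysis and no constant-juggling. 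Your route works, but the paper's rewriting of the exponent $5r/z$ back in terms of $k$ and $4r/k$ is the trick that makes the ``delicate computational point'' disappear.
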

\begin{proof} First
\begin{eqnarray}\label{fireq}
\sigma\left(\frac{r}{z},\frac{k}{z}\right)
= \sigma({r},{k})^{1/z}\le \sigma({r},{k})^{\frac{1}{8}}.
\end{eqnarray}
Now
\begin{eqnarray*}
\sqrt{\frac{r}{z}}\cdot
\left(\frac{r}{z}\right)^{4r/z}\log \frac{r}{z}&\le &\left(\frac{r}{z}\right)^{5r/z}\\
&=& \left(\frac{k\log \frac{4r}{k}}{16\log k}\right)^{5r/z}\\
&\le& k^{5r/z}\\
&=& k^{\frac{5k\log \frac{4r}{k}}{16\log k}}\\
&=& \left(\frac{4r}{k}\right)^\frac{5k}{16}\le \sigma(r,k)^{\frac{5}{8}}.
\end{eqnarray*}
This with (\ref{fireq}) implies the result.
\end{proof}

\begin{lemma}\label{LL3} Let $z=16r\log k/(k\log (4r/k))$. Then
$${r^2\choose z-1}\left(c_1\left(\frac{r}{z}\right)^{2.5}
\sigma\left(\frac{r}{z},\frac{k}{z}\right)\log \frac{r}{z}\log r\right)^z=r^{O(z)}\cdot \sigma(r,k)$$
\end{lemma}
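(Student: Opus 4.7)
The plan is to separate the expression on the left into the factor $\sigma(r/z,k/z)^z$, which will turn into exactly $\sigma(r,k)$, and a bunch of other factors each of which is polynomial in $r$, and hence raised to the $z$-th power still contributes only $r^{O(z)}$.

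First I would observe the key identity
\[
\sigma\!\left(\frac{r}{z},\frac{k}{z}\right)
=\left(\frac{2\pi r}{k}\right)^{k/(2z)}e^{k^{2}/(12rz)}
=\sigma(r,k)^{1/z},
\]
which follows directly from the definition $\sigma(r,k)=(2\pi r/k)^{k/2}e^{k^{2}/(12r)}$ by substituting $r\mapsto r/z$, $k\mapsto k/z$ (note $(k/z)/(r/z)=k/r$ and $(k/z)^{2}/(r/z)=k^{2}/(rz)$). Raising to the $z$-th power gives $\sigma(r/z,k/z)^{z}=\sigma(r,k)$, which is exactly the $\sigma(r,k)$ factor promised on the right-hand side. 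This is the same identity already used in the proof of Lemma \ref{LL2}.

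Next I would bound each of the remaining factors by $r^{O(z)}$. Using $\binom{a}{b}\le(ea/b)^{b}$,
\[
\binom{r^{2}}{z-1}\le\left(\frac{er^{2}}{z-1}\right)^{z-1}\le r^{2(z-1)+O(z)}=r^{O(z)}.
\]
Pulling the $z$-th power inside,
\[
\left(c_{1}\left(\tfrac{r}{z}\right)^{2.5}\log\tfrac{r}{z}\log r\right)^{z}
=c_{1}^{z}\left(\tfrac{r}{z}\right)^{2.5z}(\log\tfrac{r}{z})^{z}(\log r)^{z}.
\]
Since $c_{1}$ is a constant, $c_{1}^{z}=r^{O(z)}$; since $(r/z)^{2.5}\le r^{2.5}$, we get $(r/z)^{2.5z}\le r^{2.5z}=r^{O(z)}$; and since $\log r\le r$, we have $(\log r)^{z}\le r^{z}$ and similarly $(\log(r/z))^{z}\le r^{z}$. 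Each factor is therefore $r^{O(z)}$, and their product is still $r^{O(z)}$.

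Combining the identity and the bounds,
\[
\binom{r^{2}}{z-1}\!\left(c_{1}\!\left(\tfrac{r}{z}\right)^{2.5}\!\sigma\!\left(\tfrac{r}{z},\tfrac{k}{z}\right)\log\tfrac{r}{z}\log r\right)^{\!z}
=r^{O(z)}\cdot\sigma(r,k),
\]
as claimed. No step is really difficult; the only conceptual observation is the clean identity $\sigma(r/z,k/z)^{z}=\sigma(r,k)$, which makes the $\sigma$-factors telescope exactly and leaves only polynomial overhead to absorb into $r^{O(z)}$. The one thing to be slightly careful about is ensuring the value of $z$ from Lemma \ref{LL1} is large enough that the $r^{O(z)}$ estimates are not wasteful; since $z\ge 8\log r$, all the $(\log r)^{z}$ and $c_{1}^{z}$ terms are indeed absorbed without difficulty.
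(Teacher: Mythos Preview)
Your proposal is correct and follows essentially the same approach as the paper: establish the exact identity $\sigma(r/z,k/z)^{z}=\sigma(r,k)$, then bound the remaining binomial and polynomial-in-$r$ factors by $r^{O(z)}$ via $\binom{a}{b}\le(ea/b)^{b}$ and $\log r\le r$. The paper is just terser, collecting everything into a single line $\binom{r^{2}}{z-1}(c_{1}(r/z)^{2.5}\log(r/z)\log r)^{z}\le(er^{2}/z)^{z}r^{4.5z}\le r^{7z}$, whereas you bound each factor separately; the content is identical. Your closing remark about needing $z\ge 8\log r$ is unnecessary---the bounds $c_{1}^{z}\le r^{O(z)}$ and $(\log r)^{z}\le r^{z}$ hold for any $z\ge 1$ once $r$ is larger than a constant---but this does not affect correctness.
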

\begin{proof} First we have
$$\sigma\left(\frac{r}{z},\frac{k}{z}\right)^z=\sigma({r},{k}).$$
Now
\begin{eqnarray*}
{r^2\choose z-1}\left(c_1\left(\frac{r}{z}\right)^{2.5}
\log \frac{r}{z}\log r\right)^z&\le & \left(\frac{er^2}{z}\right)^z r^{4.5z}\le r^{7z}.
\end{eqnarray*}
\end{proof}

\begin{lemma}\label{LL4} Let $z=16r\log k/(k\log (4r/k))$. For $r\ge k=\omega (\sqrt{r})$ we have
$$r^{O(z)}=\sigma(r,k)^{o(1)}$$
\end{lemma}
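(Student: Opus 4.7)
The plan is to prove the equivalent logarithmic form $z\log r = o(\log\sigma(r,k))$. From the definition $\sigma(r,k) = (2\pi r/k)^{k/2} e^{k^2/(12r)}$, the lower bound $\log\sigma(r,k)\ge (k/2)\log(2\pi r/k)$ together with the explicit formula for $z$ reduces the problem to showing
\[
\frac{r\,\log k\,\log r}{k^2\log(4r/k)\log(2\pi r/k)} = o(1)
\]
as $r\to\infty$ under the hypothesis $k=\omega(\sqrt r)$.

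I would then split into two cases. When $k=\Theta(r)$ we instead use the second summand of $\log\sigma$ to get $\log\sigma(r,k) \ge k^2/(12r) = \Theta(r)$, while $z = O(\log r)$ directly from its definition, so $z\log r = O(\log^2 r) = o(\log\sigma(r,k))$ is immediate. When $k=o(r)$, both $\log(4r/k)$ and $\log(2\pi r/k)$ are $\Theta(\log(r/k))$, and bounding $\log k\le\log r$ reduces the task to proving $k^2\log^2(r/k) = \omega(r\log^2 r)$.

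For the nontrivial case I would introduce the parameters $v=\log(r/k)$ and $w=\frac{1}{2}\log r - v$. The assumption $k=\omega(\sqrt r)$ is equivalent to $w\to\infty$, and the case $k=o(r)$ means $v\to\infty$. A direct rearrangement rewrites $k^2\log^2(r/k)/(r\log^2 r)$ as a constant multiple of $v^2 e^{2w}/\log^2 r$, so it remains to verify $2\log v + 2w - 2\log\log r \to\infty$. A short subcase analysis on whether $w=o(\log r)$ or $w=\Theta(\log r)$ handles this: in the former, $v\sim\frac{1}{2}\log r$ so $\log v-\log\log r$ is bounded and the $2w\to\infty$ term dominates; in the latter, $2w$ itself already dominates $2\log\log r$.

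The main technical subtlety lies in the boundary regime $k=\sqrt r\cdot g(r)$ with $g(r)\to\infty$ arbitrarily slowly, where both $v$ and $w$ sit at delicate scales and the estimate is tight. Here one must invoke the quantitative $\omega$ form of the hypothesis, i.e.\ $w\to\infty$, rather than the weaker qualitative statement $k>\sqrt r$, in order to push the last subcase through.
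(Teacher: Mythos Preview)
Your proposal is correct and is essentially the paper's argument carried out with more care: the paper substitutes $k=\sqrt r\,\phi(r)$ and bounds $\log r^{cz}/\log\sigma(r,k)$ by $O\bigl(\log^2 r/(\phi^2(r)\log^2(\sqrt r/\phi(r)))\bigr)$, which under the identification $\phi=e^{w}$, $\sqrt r/\phi=e^{v}$ is exactly the reciprocal of your $v^2e^{2w}/\log^2 r$. Your explicit handling of the regime $k=\Theta(r)$ via the $e^{k^2/(12r)}$ factor and your subcase analysis on $w$ supply details that the paper simply asserts (indeed the paper's lower bound $\log\sigma\ge c''\phi\sqrt r\log(\sqrt r/\phi)$ degenerates as $\phi\to\sqrt r$, so your case split is a genuine improvement in rigor).
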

\begin{proof} Let $k=\sqrt{r}\cdot \phi(r)$ where $\phi(r)=\omega(1)$. Then
for a constant $c$ there is a constant $c'$ such that
$$\log r^{cz}\le  {c'\frac{\sqrt{r}\log^2 r}{\phi(r)\log({\sqrt{r}/\phi(r)})}}$$ and there is a constant $c''$
such that
$$\log \sigma(r,k)\ge c''\phi(r)\sqrt{r}\log\frac{\sqrt{r}}{\phi(r)}.$$
Now, for some constant $c'''$,
$$\frac{\log r^{cz}}{\log \sigma(r,k)}\le c''' \frac{\log^2 r}{\phi^2(r)\log^2(\sqrt{r}/\phi(r))}=o(1).$$
\end{proof}

\ignore{
\begin{eqnarray*}
N&\le & c_1  \frac{d^2\log n}{\log d} \cdot c_2 r^2{r^2\choose k}(\log d)\cdot \\
&&\sum_{d_1+\cdots+d_k=d} c_3^k\prod_{i=1}^k \frac{d_ir}{k}{2d_i\lceil r/k\rceil\choose \lceil r/k\rceil} \log d\\
&\le& c_4^k (dr)^2 \left(\frac{r^2}{k}\right)^k \left(\frac{r\log d}{k}\right)^k
\log n\sum_{d_1+\cdots+d_k=d} \prod_{i=1}^k (2d_i)^{r/k+1} d_i\\
&\le& c_5^k (dr)^2 \left(\frac{r^3\log d}{k^2}\right)^k 2^{r}
\log n\sum_{d_1+\cdots+d_k=d} \prod_{i=1}^k d_i^{r/k+2} \\
&\le& c_6^k (dr)^2 \left(\frac{r^3\log d}{k^2}\right)^k 2^{r}
\log n\left(\frac{d}{k}\right)^k \max_{d_1+\cdots+d_k=d} \left(\prod_{i=1}^k d_i\right)^{r/k+2} \\
&\le& c_6^k (dr)^2  \left(\frac{r^3d\log d}{k^3}\right)^k 2^{r}
 \left(\frac{d}{k}\right)^{r+2k} \log n\\
&\le& c_6^k (dr)^2 \left(\frac{r^3d^3\log d}{k^5}\right)^k \left(\frac{2r}{k}\right)^r\left(\frac{d}{r}\right)^{r} \log n\\
\end{eqnarray*}
For $r>2\log d$ we take $k=r/\log d$ and get
$$c_7^{r/\log d} c_8^r (2\log d)^r \left(\frac{d}{r}\right)^{r} \log n=(c_9\log d)^r \left(\frac{d}{r}\right)^{r} \log n.$$
This is
$$\left(\frac{d}{r}\right)^{r(1+o(1))} \log n$$ when $$r=\frac{d}{(\log d)^{\omega(1)}}.$$

--------------------------------}

\renewcommand{\O}{O_k}
\section{Application to parametrized algorithms with relaxed disjointness constraints}\label{app:paramalg}
In this appendix, for the purpose of deriving Theorems \ref{thm:rsimpalg} and \ref{thm:rmonalg}, we explain how objects related to cover-free families were used by \cite{GLP} to obtain certain parameterized algortihms.

\paragraph*{Notation.}
Throughout this appendix, we use the notation $\O$ to hide $k^{O(1)}$ terms. We denote $[n]=\{1,2,\ldots,n\}$. For sets $A$ and $B$, by $\oftypenon{A}{B}$ we denote the set of all functions from $A$ to $B$. 
The notation $\triangleq$ is used to introduce new objects defined by formulas on the right hand side.



In fact, \cite{GLP} do not use CFFs directly, but related objects called \emph{minimal separating families} (Definition \ref{dfn:minseparating}) that have an additional injectivity property.
We begin by formally showing that CFFs indeed imply minimal separting families of similar size.

\subsection{From CFFs to minimal separating families}

\paragraph*{Hashing families.}
Recall that, for an integer $t\geq 1$, we say that a family of functions $\H\oftype{[n]}{[m]}$ is a \emph{$t$-perfect hash family},
if for every $C\subseteq [n]$ of size $|C| = t$ there is $f\in \H$ that is injective on $T$. Alon, Yuster and Zwick \cite{AYZ08} used a construction of Moni Naor
(based on ideas from Naor et al.~\cite{NSS95}) to hash a subset of size $t$ into a world of size $t^2$ using a very small set of functions:
\begin{theorem}[\cite{AYZ08} based on Naor]\label{thm:hash_to_k^2}
For integers $1\leq t\leq n$, a $t$-perfect hash family $\H\oftype{[n]}{[t^2]}$
of size $t^{O(1)}\cdot \log n$ can be constructed in time $O(t^{O(1)}\cdot n\cdot \log n)$
\end{theorem}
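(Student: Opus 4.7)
The plan is to follow the standard two-level Moni Naor composition. I would construct a ``coarse'' family $\H_1\oftype{[n]}{[M]}$ that compresses $[n]$ to an intermediate universe of size $M=t^{O(1)}$ while preserving injectivity on every $t$-subset, together with a ``fine'' family $\H_2\oftype{[M]}{[t^2]}$. Their composition $\H:=\{h_2\circ h_1 : h_1\in\H_1,\, h_2\in\H_2\}$ is the desired $t$-perfect hash family: given any $C\subseteq[n]$ with $|C|=t$, first pick $h_1\in\H_1$ injective on $C$ (so $h_1(C)$ has exactly $t$ elements), then pick $h_2\in\H_2$ injective on $h_1(C)$; the composition $h_2\circ h_1$ is then injective on $C$. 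The size of $\H$ is $|\H_1|\cdot|\H_2|$, so it suffices to achieve $|\H_1|=t^{O(1)}\log n$ and $|\H_2|=t^{O(1)}$, with matching construction times.

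For $\H_1$, I would apply Lemma~\ref{phf} with $q=M$ the smallest prime power exceeding $8t^2$; this yields an $(n,M,t)$-PHF of size $O(t^2\log n/\log(M/t^2))=O(t^2\log n)=t^{O(1)}\log n$, constructible in linear time. For $\H_2$, the crucial observation is that both the domain $[M]$ and the codomain $[t^2]$ have size polynomial in $t$, so the construction of $\H_2$ is independent of $n$. A uniformly random $f:[M]\to[t^2]$ is injective on a fixed $t$-subset with probability at least $\prod_{i=0}^{t-1}(1-i/t^2)\geq 1/2$, hence $O(t\log M)=O(t\log t)$ independent samples produce a $t$-perfect family by a union bound over the $\binom{M}{t}$ subsets. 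A standard derandomization (or any fully explicit small-domain PHF construction) then yields $\H_2$ of size $t^{O(1)}$ in time $t^{O(1)}$.

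Composing, $\H$ has size $t^{O(1)}\log n$ and its description (a table giving $h(x)$ for every $h\in\H$ and $x\in[n]$) can be output in time $O(|\H|\cdot n)=O(t^{O(1)}\cdot n\log n)$, as required. The main technical obstacle I anticipate is ensuring that $\H_2$ is genuinely constructible in $t^{O(1)}$ time rather than, say, $t^{O(t)}$ time, since a naive enumeration of candidate families is prohibitively slow. I would address this by bootstrapping Lemma~\ref{phf} itself: apply it in $O(\log t)$ stages, each stage reducing the current universe size by a constant factor, from $M$ down to $O(t^2)$, and then finish with a single small adjustment from $O(t^2)$ down to exactly $[t^2]$ by enumerating over the $t^{O(1)}$ possible surjections. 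This keeps both the size and construction time at $t^{O(1)}$ throughout the reduction, completing the proof.
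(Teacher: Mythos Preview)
The paper does not prove this theorem; it is quoted as a known result from \cite{AYZ08} (attributed to Naor), so there is no in-paper argument to compare against. Your two-level composition $\H_2\circ\H_1$ is precisely the standard Naor scheme behind the cited result, and your construction of the coarse family $\H_1$ via Lemma~\ref{phf} is correct and gives the claimed $t^{O(1)}\log n$ size in linear time.

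The gap is in your concrete plan for $\H_2$. Lemma~\ref{phf} requires the target $q$ to exceed $4(\binom{t}{2}+1)\approx 2t^2$, so no amount of iterating it can push the universe below that threshold; your ``$O(\log t)$ stages, each reducing by a constant factor'' stalls above $2t^2$ and never reaches $[t^2]$. The proposed final adjustment by ``enumerating over the $t^{O(1)}$ possible surjections'' from $[O(t^2)]$ onto $[t^2]$ is also not viable: there are $t^{\Theta(t^2)}$ such maps, not $t^{O(1)}$. The actual ingredient in the Naor construction is a \emph{$2$-universal} (pairwise-independent) hash family from $[M]$ to $[t^2]$: for a uniformly random member $h$ and any fixed $t$-set $C$, the probability of a collision on $C$ is at most $\binom{t}{2}/t^2<1/2$, so some $h$ in the family is injective on $C$; taking the entire family as $\H_2$ therefore yields a $t$-perfect hash family. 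An explicit $2$-universal family of size $t^{O(1)}$ (for instance, affine maps $x\mapsto ax+b$ over a prime field of order $\Theta(t^2)$, followed by reduction modulo $t^2$) is constructible in time $t^{O(1)}$, and this is what should replace your last paragraph.
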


We will also use the following perfect hash family given by Naor, Schulman and Srinivasan \cite{NSS95}.
\begin{theorem}[\cite{NSS95}]\label{thm:NSShash}
For integers $1\leq t\leq n$, a $t$-perfect hash family $\H\oftype{[k^2]}{[t]}$ of size $e^{t+ O(\log^2 t)}\cdot \log k$
can be constructed in time $O(e^{t+ O(\log^2 t)}\cdot k\cdot \log k)$.
\end{theorem}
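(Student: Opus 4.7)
The plan is to combine a probabilistic existence argument, a domain-reduction step based on Theorem \ref{thm:hash_to_k^2}, and a derandomization of the small-universe piece in the style of \cite{NSS95}. First I would establish existence by Stirling and the union bound: for a fixed $t$-subset $C\subseteq [k^2]$, a uniformly random $f\colon [k^2]\to [t]$ restricts to a bijection $C\to[t]$ with probability $t!/t^t = \Theta(e^{-t}\sqrt{t})$. Since there are $\binom{k^2}{t}\le k^{2t}$ choices of $C$, a family of $m = O(e^t\cdot t\log k)$ independent uniformly random functions is a $t$-perfect hash family with positive probability.

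To make this explicit within the target bound $e^{t+O(\log^2 t)}\cdot\log k$, I would split the construction into two stages. Stage one applies Theorem \ref{thm:hash_to_k^2} to produce a $t$-perfect hash family $\H_1 \oftype{[k^2]}{[t^2]}$ of size $t^{O(1)}\log k$ in time $\O(k\log k)$; this collapses the large domain $[k^2]$ to the polynomial-size domain $[t^2]$. Stage two constructs a $t$-perfect hash family $\H_2 \oftype{[t^2]}{[t]}$ of size $e^{t+O(\log^2 t)}$. Composing gives the target family $\H = \{h_2\circ h_1 : h_1\in\H_1,\; h_2\in\H_2\}$: for any $t$-subset $C\subseteq[k^2]$ some $h_1\in\H_1$ is injective on $C$, making $h_1(C)$ a $t$-subset of $[t^2]$, and then some $h_2\in\H_2$ is a bijection from $h_1(C)$ onto $[t]$, so $h_2\circ h_1$ is a bijection from $C$ onto $[t]$. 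The total size is $t^{O(1)}\log k\cdot e^{t+O(\log^2 t)} = e^{t+O(\log^2 t)}\log k$, and the construction time is dominated by the search in stage two.

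For stage two, the idea is to derandomize the counting argument over the small universe $[t^2]$. I would use a $t$-wise independent family of hash functions from $[t^2]$ to $[t]$, obtained from evaluations of degree-$(t-1)$ polynomials over a field of size $\mathrm{poly}(t)$. Such a sample space has size $t^{O(t)}$ and can be enumerated explicitly in time $t^{O(t)}$, which is affordable because $|[t^2]|$ is already polynomial in $t$. Inside this $t$-wise independent family the Stirling probability estimate still applies, so one can extract an explicit subfamily of size $e^{t+O(\log^2 t)}$ either by exhaustive search or by the method of conditional probabilities. The $O(\log^2 t)$ slack in the exponent absorbs both the Stirling correction $\sqrt{t}$ and the polynomial overhead from derandomization.

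The main obstacle is shaving the constant in the exponent down to exactly $1$. A black-box application of the splitter bound in Lemma \ref{UB} to parameters $(t^2,t,t)$ gives size $\Theta(\sigma(t,t)\log t) = (2\pi)^{t/2}e^{t/12}\cdot\mathrm{poly}(t)$, which is $e^{\Theta(t)}$ but with a constant strictly larger than $1$. Getting down to $e^{t+O(\log^2 t)}$ requires following the refined recursive splitter of \cite{NSS95} (analogous in spirit to Lemma \ref{splitter} but specialized to $k=r=t$): one repeatedly partitions into balanced buckets of geometrically decreasing size and recurses, paying one polynomial-in-$t$ factor per level and $O(\log t)$ levels, which together contribute only the $O(\log^2 t)$ term in the exponent.
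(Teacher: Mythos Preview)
The paper does not prove this theorem; it is quoted as a black-box result from \cite{NSS95} and used without proof. Your proposal therefore goes beyond what the paper does by sketching a reconstruction of the \cite{NSS95} argument.

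Your high-level decomposition---hash $[k^2]\to[t^2]$ via Theorem~\ref{thm:hash_to_k^2}, then build a $t$-perfect hash family $[t^2]\to[t]$ on the small universe---is the right architecture. The gap is in stage two, where you conflate size and time. Enumerating a $t$-wise independent sample space of cardinality $t^{O(t)}$ and running conditional probabilities over all $\binom{t^2}{t}$ subsets can output a family of \emph{size} $e^{t+O(\log t)}$, but the \emph{construction time} of that procedure is $t^{O(t)}=e^{\Theta(t\log t)}$, which does not fit inside the target $e^{t+O(\log^2 t)}\cdot k\log k$. You correctly sense this in your final paragraph and point to the recursive splitter of \cite{NSS95}, but that is exactly where the entire content lives: the ``polynomial factor per level, $O(\log t)$ levels'' sketch \emph{is} the theorem, and neither Lemma~\ref{UB} nor Lemma~\ref{splitter} of the present paper yields it (at $k=r=t$ one gets $\sigma(t,t)=(2\pi)^{t/2}e^{t/12}\approx e^{1.002\,t}$, and the extra $0.002\,t$ is not $O(\log^2 t)$). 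So your proposal isolates the correct skeleton but, like the paper, ultimately defers the decisive step back to \cite{NSS95}.
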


\begin{definition}[Minimal separating family]\label{dfn:minseparating}
A family of functions $\H \oftype{[n]}{[t+1]}$
is \emph{$(t,k)$-minimal separating} if
for every disjoint subsets $C,D\subseteq [n]$
with $|C| =  t $ and $|D| \leq  k-t$, there is a function $h\in \H$ such that
\begin{itemize}
\item $h(C)=[t]$.
\item $h(D)\subseteq \{t+1\}$.
\end{itemize}
\end{definition}

We show that small cover-free families imply small minimal-separting families.
\begin{lemma}\label{lem:cff->minimalsepfamily}
Fix any $t\leq k \leq n$.
Suppose a $(k^2,(t,k-t))$-CFF $\cF$ can be constructed in time $S$.
Then a $(t,k)$-minimal separating family of size $\O(|\cF|\cdot 2^{O(t)}\cdot \log n)$
Can be constructed in time $\O(S\cdot 2^{O(t)}\cdot \log n \cdot n)$.
\end{lemma}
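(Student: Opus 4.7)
\medskip
\noindent\textbf{Proof plan.} The plan is to build the minimal separating family in three layers: an outer hash that shrinks the universe from $[n]$ to $[k^2]$, an inner CFF that chooses which preimage coordinates go to the ``selected'' side versus the ``excluded'' side $\{t+1\}$, and a second perfect hash on $[k^2]$ that bijects the selected $t$ elements onto $[t]$. Concretely, let $\H_1 \oftype{[n]}{[k^2]}$ be the $k$-perfect hash family given by Theorem~\ref{thm:hash_to_k^2}, of size $k^{O(1)}\log n$, constructed in time $\O(n\log n)$. Let $\H_2 \oftype{[k^2]}{[t]}$ be the $t$-perfect hash family given by Theorem~\ref{thm:NSShash}, of size $e^{t+O(\log^2 t)}\log k = 2^{O(t)}\log k$, constructed in time $\O(2^{O(t)})$. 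Finally, take the given $(k^2,(t,k-t))$-CFF $\cF \subseteq \{0,1\}^{k^2}$, interpreted as a set of functions $a:[k^2]\to\{0,1\}$.

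For each triple $(h_1, a, h_2) \in \H_1 \times \cF \times \H_2$, define $h:[n]\to[t+1]$ by
\[
h(x) \;=\; \begin{cases} h_2(h_1(x)) & \text{if } a(h_1(x)) = 1, \\ t+1 & \text{if } a(h_1(x)) = 0. \end{cases}
\]
Let $\H$ be the set of all such $h$. To verify the minimal separating property, fix disjoint $C,D\subseteq[n]$ with $|C|=t$ and $|D|\le k-t$. Since $|C\cup D|\le k$, some $h_1\in\H_1$ is injective on $C\cup D$; write $C'=h_1(C), D'=h_1(D)$, so $C',D'\subseteq[k^2]$ are disjoint of sizes $t$ and at most $k-t$. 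Pad $D'$ arbitrarily within $[k^2]\setminus(C'\cup D')$ to an $(k-t)$-element set $\tilde D'$ (possible since $k\le k^2$). Applying Definition~\ref{dfn-cff} to the $k$ indices $C'\cup\tilde D'$ with selected subset $C'$, we obtain $a\in\cF$ with $a\equiv 1$ on $C'$ and $a\equiv 0$ on $\tilde D'\supseteq D'$. Since $\H_2$ is $t$-perfect, some $h_2\in\H_2$ is injective on $C'$, hence $h_2(C')=[t]$. For the resulting $h$, we then get $h(C) = h_2(C') = [t]$ and $h(D)\subseteq\{t+1\}$, as required.

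The size bound is immediate: $|\H|\le |\H_1|\cdot|\cF|\cdot|\H_2| = k^{O(1)}\log n \cdot |\cF|\cdot 2^{O(t)}\log k = \O(|\cF|\cdot 2^{O(t)}\cdot \log n)$. For the running time, constructing $\H_1,\cF,\H_2$ costs $\O(n\log n) + S + \O(2^{O(t)})$, and writing out each function $h\in\H$ as a table takes $O(n)$ work per triple, for a total of $O(n\cdot|\H|) = \O(|\cF|\cdot 2^{O(t)}\cdot n\log n)$; bounding this by $\O(S\cdot 2^{O(t)}\cdot \log n\cdot n)$ (as in the statement) follows since $S\ge|\cF|$.

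There is no real obstacle here; the main thing to get right is a careful bookkeeping of the three families and, in the CFF step, the small point that $|D|$ might be strictly less than $k-t$, which is handled by padding to a set of exactly $k-t$ extra elements inside $[k^2]$ so that Definition~\ref{dfn-cff} applies verbatim.
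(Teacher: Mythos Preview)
Your proof is correct and follows essentially the same three-stage approach as the paper: hash $[n]$ into $[k^2]$ using Theorem~\ref{thm:hash_to_k^2}, separate the images of $C$ and $D$ using an element of the CFF, and then perfectly hash the ``selected'' side onto $[t]$ using Theorem~\ref{thm:NSShash}. Your explicit padding of $D'$ to size exactly $k-t$ and the observation that $S\ge|\cF|$ are minor details that the paper glosses over, but otherwise the arguments are the same.
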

\begin{proof}
Fix disjoint subsets $C,D\subseteq [n]$ with $|C|=t$ and $|D|\leq k-t$.
It will be convenient to present the family by constructing $h$
adaptively given $C$ and $D$. That is, for arbitrarily chosen $C$ and $D$, we will adaptively construct a function $h$ that separates $C$ from $D$. Function $h$ will be constructed by taking a number of {\em{choices}}, where each choice is taken among a number of possibilities. The final family $\H$ will comprise all $h$ that can be obtained using any such sequence of choices; thus, the product of the numbers of possibilities will limit the size of $\H$. As $C$ and $D$ are taken arbitrarily, it immediately follows that such $\H$ separates every pair $(C,D)$.

\begin{enumerate}
\item Let $\H_0\oftype{[n]}{[k^2]}$ be the $k$-perfect hash family given by
Theorem~\ref{thm:hash_to_k^2}. Choose $f_0 \in \H_0$ that is injective on $C\cup D$ --- there are $k^{O(1)}\cdot \log n$ choices for this stage.

From now on, we identify $C$ and $D$ with their images in $[k^2]$ under $f_0$.
\item Note that an element $f\in \cF$ can be viewed as a function $f:[k^2]\to \set{0,1}$. Now choose an element $f_1$ of the $(k^2,(t,k-t))$-CFF $\cF$, with $f_1(C)\equiv 1$ and $f_1(D)\equiv 0$ --- there are $|\cF|$ choices for this stage.

At this stage we have `separated' $C$ from $D$, and just need to satisfy the additional requirement of being injective on $C$.
\item  Let $\H_2\oftype{f_1^{-1}(1)}{[t]}$ be the $t$-perfect hash family given by Theorem~\ref{thm:NSShash}. Choose a function $f_2\in \H_2$ that is injective on $C$ --- there are $e^{t+ O(\log^2 t)}\cdot \log k$ choices for this stage.
\end{enumerate}
The running times and family size are immediate from the construction.
\end{proof}

Plugging in our construction from Theorem \ref{thm:main} to the above we get
\begin{corollary}\label{cor:sepfamily}
 Fix any $t\leq k \leq n$.
 A $(t,k)$-minimal separating family of size $\O( (k/t)^{t+o(t)}\cdot 2^{O(t)}\cdot \log n)$
can be constructed in time $\O((k/t)^{t+o(t)}\cdot 2^{O(t)}\cdot \log n\cdot n)$
\end{corollary}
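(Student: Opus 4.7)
The plan is a direct two-step substitution: first apply Theorem~\ref{thm:main} to construct a $(k^2,(t,k-t))$-CFF, then feed this CFF into Lemma~\ref{lem:cff->minimalsepfamily} to upgrade it to a $(t,k)$-minimal separating family. The bulk of the work is bookkeeping on the size bounds, in particular rewriting $N(t,k-t)$ in the form $(k/t)^{t+o(t)}$ modulo factors that can be absorbed into the $\O(\cdot)$ notation (which hides $k^{O(1)}$) and into the $2^{O(t)}$ term.

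First, I would invoke Theorem~\ref{thm:main} with $n_0 = k^2$, $r_0 = t$, $s_0 = k-t$ (so $d_0 = k$), producing a $(k^2,(t,k-t))$-CFF $\cF$ of size $|\cF| = N(t,k-t)^{1+o(1)} \cdot \log(k^2)$ constructible in time $S = O(N(t,k-t)^{1+o(1)} \cdot k^2 \cdot \log k)$. Next I would estimate $N(t,k-t) = k \binom{k}{t}/\log\binom{k}{t}$ using $\binom{k}{t} \leq (ek/t)^t$ to get
\[
N(t,k-t) \;\leq\; k^{O(1)} \cdot (ek/t)^t \;=\; k^{O(1)} \cdot 2^{O(t)} \cdot (k/t)^t.
\]
Taking the $(1+o(1))$-th power moves the slack into the exponent of $(k/t)$, yielding $N(t,k-t)^{1+o(1)} = k^{O(1)} \cdot 2^{O(t)} \cdot (k/t)^{t+o(t)}$. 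Note that when $t = O(1)$ the $o(t)$ term is vacuous and the entire $k^{O(1)}$ slack is absorbed by the $\O(\cdot)$ of the final statement.

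Finally, applying Lemma~\ref{lem:cff->minimalsepfamily} with this $\cF$ yields a $(t,k)$-minimal separating family of size $\O(|\cF|\cdot 2^{O(t)} \cdot \log n)$, which by the above expands to $\O((k/t)^{t+o(t)} \cdot 2^{O(t)} \cdot \log n)$; the stray $\log k$ from $\log(k^2)$ and the $k^{O(1)}$ from the $N(t,k-t)$ bound are both absorbed into the $\O(\cdot)$ notation. The construction time is $\O(S \cdot 2^{O(t)} \cdot \log n \cdot n) = \O((k/t)^{t+o(t)} \cdot 2^{O(t)} \cdot \log n \cdot n)$, as claimed. I do not anticipate a genuine obstacle here --- the proof is essentially a substitution, and the main point to be careful about is that the multiplicative $k^{O(1)}$ factors (arising from $\log(k^2)$ in the CFF size and from the $k/\log\binom{k}{t}$ prefactor in $N(t,k-t)$) are all legitimately hidden by the $\O$ shorthand fixed at the start of the appendix.
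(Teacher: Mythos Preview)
Your proposal is correct and follows essentially the same route as the paper: plug Theorem~\ref{thm:main} (with $n_0=k^2$, $r_0=t$, $s_0=k-t$) into Lemma~\ref{lem:cff->minimalsepfamily}, and then bound $N(t,k-t)^{1+o(1)}$ via $\binom{k}{t}\le (ek/t)^t$, absorbing all stray $k^{O(1)}$ and $\log k$ factors into the $\O(\cdot)$ notation. The paper's proof is terser but identical in substance.
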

\begin{proof}
 It's a straightforward plugin of Theorem \ref{thm:main} into Lemma \ref{lem:cff->minimalsepfamily}.
 The only thing to notice is that for any $t\leq k$,
 \[N(t,k-t)^{1+o(1)}\cdot 2^{O(t)} \leq  (k\cdot (ek/t)^t)^{1+o(1)}=
 \O( (k/t)^{t+o(t)}\cdot 2^{O(t)}).\]
\end{proof}

We proceed to define and construct \emph{multiset separators} that are smaller
than those in \cite{GLP}.

\newcommand{\rkcomp}{$(r,k)$-compatible\xspace}
\newcommand{\rkcons}{$(r,k)$-consistent\xspace}
\subsection{Multiset Separators}
\paragraph*{Notation for multisets.}
Fix integers integers $n,r,k\geq 1$.
We use $\rr$ to denote $\set{0,\ldots,r}$.
An \emph{\rset} is a multiset $A$ where each element of $[n]$
appears at most $r$ times.
It will be convenient to think of $A$ as a vector in $\rr^n$,
where $A_i$ denotes the number of times $i$ appears in $A$.
We denote by $|A|$ the number of elements in $A$ counting repetitions.
That is, $|A| = \sum_{i=1}^n A_i $.
We refer to $|A|$ as the \emph{size} of $A$.
An \emph{\rkset} is an \rset $A\in \rr^n$,
where the number of elements with repetitions is at most $k$.
That is, $|A| \leq k$. For two multisets $A,B$ over $[n]$,

Fix \rsets $A,B\in \rr^n$.
We say that $A\leq B$ when
$A_i\leq B_i$ for all $i\in [n]$.
By $\comp{A}\in \rr^n$ we denote
the ``complement'' of \rset $A$,
that is, $\comp{A}_i = r-A_i$ for all $i\in [n]$. By $A+B$ we denote the ``union'' of $A$ and $B$, that is, $(A+B)_i=A_i+B_i$ for all $i\in [n]$.
Suppose now that $A$ and $B$ are \rksets.
We say that $A$ and $B$ are \emph{\rkcomp}
if $A+B$ is also an \rkset, and $|A+B|=k$.
That is, the total number of elements with repetitions in $A$ and $B$ together is $k$
and any specific element $i\in [n]$ appears in $A$ and $B$ together at most $r$ times.
With the notation above at hand, we can define the central object needed for the algorithms of \cite{GLP}.
\begin{definition}[Multiset separator]\label{dfn:multiseparator}
Let $\F$ be a family of \rsets.
We say that $\F$ is an \rksep if
for any \rksets $A,B \in \rr^n$
that are \rkcomp, there exists $F\in \F$
such that $A\leq  F\leq \comp{B}$.
\end{definition}

\cite{GLP} showed that a minimal separating family can be used to construct an \rksep.
\begin{theorem}\label{thm:multisetsepmain}[\cite{GLP} Theorem 3.3]
Fix integers $n,r,k$ such that $1<r\leq k\leq n$,
and let $t \triangleq  \lfloor 2k/r \rfloor$.
Suppose a $(t,k)$-minimal separating family $\H\oftype{[n]}{[t+1]}$
can be constructed in time $f(r,k,n)$.
Then an \rksep $\F$ of size $|\H|\cdot (r+1)^t$
can be constructed in time $\O(f(r,k,t))\cdot (r+1)^t)$.
\end{theorem}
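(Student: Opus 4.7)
The plan is to construct each candidate $F \in \F$ as a composition of a function $h \in \H$ from the minimal separating family and a value assignment $\phi : [t] \to \set{0, 1, \ldots, r}$. The minimal separating family will isolate the coordinates where $A+B$ is ``heavy'' into the $t$ distinct buckets of $[t]$, and $\phi$ will specify the value of $F$ on each such bucket; all remaining coordinates, funneled into bucket $t+1$, will receive a universal default value. This immediately gives $|\F| \leq |\H| \cdot (r+1)^t$ as required.

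The crux is to show that, for any \rkcomp pair of \rksets $A, B$, the ``heavy'' set $H \triangleq \set{i : A_i + B_i > r/2}$ has size at most $t$, while the ``light'' support $D' \triangleq \set{i : 0 < A_i + B_i \leq r/2}$ admits a single default value that simultaneously upper-bounds $A$ and leaves room for $B$. The size bound is immediate: $(r/2)\cdot|H| < \sum_{i \in H}(A_i + B_i) \leq k$ gives $|H| < 2k/r$ and hence $|H| \leq t$. The default $\lceil r/2 \rceil$ works on $D'$ because $A_i \leq \lfloor r/2 \rfloor \leq \lceil r/2 \rceil$ and $\lceil r/2 \rceil + B_i \leq r$ whenever $A_i + B_i \leq r/2$.

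To invoke the $(t,k)$-minimal separating family, which requires $|C| = t$ and $|D| \leq k-t$, I would set $C$ to be $H$ padded up to exactly $t$ elements, drawing padding first from $D'$ and then, if needed, from outside the support of $A+B$, and set $D \triangleq D' \setminus C$. The estimate $|D'| \leq k - |H|$ (each $i \in H$ contributes at least one to $\sum_j (A_j + B_j) = k$) then gives $|D| \leq (k - |H|) - (t - |H|) = k - t$ in the case $|D'| \geq t - |H|$, and $|D| = 0$ in the remaining case. Given the separator $h$ with $h|_C$ bijective onto $[t]$ and $h(D) \subseteq \set{t+1}$, defining $F(i) \triangleq \phi(h(i))$ when $h(i) \in [t]$ and $F(i) \triangleq \lceil r/2 \rceil$ otherwise, and choosing $\phi(j) \triangleq A_i$ for the unique $i \in C$ with $h(i) = j$, yields $A \leq F \leq \comp{B}$; coordinates outside $C \cup D$ have $A_i = B_i = 0$ and satisfy the constraint trivially.

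The main obstacle is the padding bookkeeping when $|H| < t$: one must separately handle the cases $|D'| \geq t - |H|$ and $|D'| < t - |H|$ and verify $|D| \leq k - t$ in each. The underlying slack is that every element of $H$ already consumes at least one unit of the budget $|A+B| = k$, leaving at most $k - |H|$ for coordinates in $D'$. The running time bound then follows by iterating over all $h \in \H$ and all $(r+1)^t$ assignments $\phi$, with per-candidate work dominated by a single pass that computes the resulting $F$ coordinate by coordinate.
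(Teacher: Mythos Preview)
The paper does not give its own proof of this theorem; it merely cites it as Theorem~3.3 of \cite{GLP} and immediately applies it as a black box. So there is nothing in the present paper to compare your argument against.

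That said, your proposal is a correct proof. The key steps all check out: the strict inequality $(r/2)|H| < \sum_{i\in H}(A_i+B_i) \le k$ indeed forces $|H|\le \lfloor 2k/r\rfloor = t$; the default value $\lceil r/2\rceil$ satisfies $A_i \le \lceil r/2\rceil \le r-B_i$ on the light support since $A_i,B_i$ are integers bounded by $\lfloor r/2\rfloor$; and the padding argument together with $|H|+|D'|\le k$ gives $|D|\le k-t$ in both cases. One point worth stating explicitly (you handle it implicitly): for coordinates $i\notin C\cup D$ one may have $h(i)\in[t]$, in which case $F_i=\phi(h(i))$ is some arbitrary value in $\rr$, not necessarily $\lceil r/2\rceil$; but since such $i$ lie outside the support of $A+B$ (because $C\cup D\supseteq H\cup D'$), the constraint $0\le F_i\le r$ is automatic regardless. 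Your construction and its size and time bounds are exactly as claimed, and this is the natural argument one would expect \cite{GLP} to use.
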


Plugging in our construction of minimal seperating families from Corollary \ref{cor:sepfamily} we get
\begin{corollary}\label{cor:rksep}
 Fix integers $n,r,k$ such that $1< r\leq k$. Then an \rksep $\F$ of size $\O(r^{4k/r+o(k/r)}\cdot 2^{O(k/r)}\cdot \log n)$
can be constructed in time $\O(r^{4k/r+o(k/r)}\cdot 2^{O(k/r)}\cdot n\cdot \log n)$
\end{corollary}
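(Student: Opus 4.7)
The plan is a direct plug-in, just as the statement of the corollary hints: apply Corollary \ref{cor:sepfamily} to build a small $(t,k)$-minimal separating family, then feed it into the reduction of Theorem \ref{thm:multisetsepmain}. The only genuine work is checking that the two exponential factors combine to give exactly the claimed bound, so I would set up the arithmetic carefully before writing anything else.

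First I would set $t \triangleq \lfloor 2k/r \rfloor$, matching the choice required by Theorem \ref{thm:multisetsepmain}. Then, invoking Corollary \ref{cor:sepfamily} with this $t$, a $(t,k)$-minimal separating family $\H$ of size
\[
|\H| = \O\bigl((k/t)^{t+o(t)}\cdot 2^{O(t)}\cdot \log n\bigr)
\]
is constructible in time $\O(|\H|\cdot n)$. Substituting $t = 2k/r$ gives $k/t = r/2$, so $(k/t)^{t+o(t)} = (r/2)^{2k/r+o(k/r)} = r^{2k/r+o(k/r)}\cdot 2^{-O(k/r)}$, and after absorbing the $2^{-O(k/r)}$ into the $2^{O(k/r)}$ factor we obtain
\[
|\H| = \O\bigl(r^{2k/r+o(k/r)}\cdot 2^{O(k/r)}\cdot \log n\bigr),
\]
with a matching construction time bound of the same expression times $n$.

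Next I would apply Theorem \ref{thm:multisetsepmain} with this $\H$. That theorem multiplies both $|\H|$ and the construction time by the factor $(r+1)^t$. Since $t = 2k/r$, this factor is
\[
(r+1)^{2k/r} = r^{2k/r+o(k/r)}\cdot 2^{O(k/r)},
\]
where the $o(k/r)$ absorbs the $\log(1+1/r)$ contribution and the $2^{O(k/r)}$ accounts for lower-order slack. Multiplying the two bounds then yields the claimed size $\O(r^{4k/r+o(k/r)}\cdot 2^{O(k/r)}\cdot \log n)$ and the claimed running time $\O(r^{4k/r+o(k/r)}\cdot 2^{O(k/r)}\cdot n\cdot \log n)$.

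There is no real obstacle here; the whole thing is bookkeeping. The only mild care-point is ensuring that the $o(t)$ in the minimal separating family bound translates correctly to $o(k/r)$ when $t = 2k/r$ (trivial, since $t$ and $k/r$ differ by a constant factor), and that the combined $2^{O(t)}\cdot (r+1)^t$ collapses cleanly rather than leaving spurious $r^{O(t)}$ noise outside the leading $r^{4k/r}$ term. Writing out the two multiplications explicitly, as above, makes this transparent.
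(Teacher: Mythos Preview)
Your proposal is correct and follows exactly the approach the paper takes: the paper simply states that the corollary follows by ``plugging in our construction of minimal separating families from Corollary~\ref{cor:sepfamily}'' into Theorem~\ref{thm:multisetsepmain}, without spelling out the arithmetic. Your write-up supplies precisely that arithmetic, and the bookkeeping with $t=\lfloor 2k/r\rfloor$, $(k/t)^{t+o(t)}\le r^{2k/r+o(k/r)}2^{O(k/r)}$, and $(r+1)^t\le r^{2k/r}2^{O(k/r)}$ is correct.
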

The above corollary is an analog of Corollary 3.4 in \cite{GLP}
where the exponent of $r$ was $6k/r$ rather than $4k/r  + o(k/r)$.

From this point on we do not give full details, as our theorems follow by a direct
plug in of Corollary \ref{cor:rksep} in \cite{GLP} as a relpacement for their Corollary 3.4.

Specfically, using Corollary \ref{cor:rksep}, the algorithm in Corollary 3.8
of \cite{GLP} for finding a represntative set of a family of multisets $\P$ will run in time $\O(|\P| \cdot \sepcompute)$ rather than  $\O(|\P|\cdot r^{6k/r}\cdot 2^{O(k/r)}\cdot n\log n)$ which will translate to the running times stated in Theorems \ref{thm:rsimpalg}
and \ref{thm:rmonalg} when running the Algorithms proving Theorems 5.6 and 5.8 in \cite{GLP}.

\end{document}